\def\Q {\ensuremath{\mathbb{Q}}}
\def\N {\ensuremath{\mathbb{N}}}
\def\K {\ensuremath{\mathbb{K}}}
\def\L {\ensuremath{\mathbb{L}}}
\def\D {\ensuremath{D}}
\def\m {\ensuremath{\mathfrak{m}}}
\def\vm {\ensuremath{\mathsf{m}}}
\def\vv {\ensuremath{\mathsf{v}}}
\def\vw {\ensuremath{\mathsf{w}}}
\def\mM {\ensuremath{\mathsf{M}}}
\def\scrS {\ensuremath{\mathscr{S}}}
\def\ann {\ensuremath{\mathrm{ann}}}
\def\ker {\ensuremath{\mathrm{ker}}}
\DeclareBoldMathCommand{\bs}{s}
\DeclareBoldMathCommand{\bu}{u}
\DeclareBoldMathCommand{\bv}{v}
\DeclareBoldMathCommand{\bX}{X}
\DeclareBoldMathCommand{\bx}{x}
\DeclareBoldMathCommand{\balpha}{\alpha}
\DeclareBoldMathCommand{\bbeta}{\beta}
\newtheorem{Example}{Example}
\newtheorem{Prop}{Proposition}
\newtheorem{Lemma}{Lemma}
\title{Algorithms for zero-dimensional ideals using linear recurrent sequences}
\author{Vincent Neiger \inst{1}, Hamid Rahkooy\inst{2}, \and \'Eric Schost \inst{2}}
\institute{
Department of Applied Mathematics and Computer Science, Technical University of Denmark.
\and
Cheriton School of Computer Science, University of Waterloo, Canada.
}
\begin{document}

\maketitle

\begin{abstract}
Inspired by Faug\`ere and Mou's sparse FGLM algorithm, we show how
using linear recurrent multi-dimensional sequences can allow one to
perform operations such as the primary decomposition of an ideal,
by computing the annihilator of one or several such
sequences.
\end{abstract}


\section{Introduction}

In what follows, $\K$ is a perfect field. We consider the set
$\scrS=\K^{\N^n}$ of $n$-dimensional sequences $\bu = (u_m)_{m \in
  \N^n}$, and the polynomial ring $\K[X_1,\dots,X_n]$, and we are
interested in the following question. Let $I \subset
\K[X_1,\dots,X_n]$ be a zero-dimensional ideal. Given a monomial basis
of $Q=\K[X_1,\dots,X_n]/I$, together with the corresponding
multiplication matrices $\mM_1,\dots,\mM_n$, we want to compute the
Gr\"obner bases, for a target order $>$, of pairwise coprime ideals
$J_1,\dots,J_K$ such that $I = \cap_{1 \le k \le K} J_k$.

Faug\`ere {\it et al.}'s paper~\cite{FaGiLaMo93} shows how to solve
this question with $K=1$ (so $J_1$ is simply $I$) in time $O(nD^3)$,
where $D=\deg(I)$; here, the degree $\deg(I)$ is the $\K$-vector space
dimension of $Q$. More recently, algorithms have
been given with the cost bound
$O\tilde{~}(nD^\omega)$~\cite{FaGaHuRe13,FaGaHuRe14,Neiger16}, where
the notation $O\tilde{~}$ hides polylogarithmic factors, still with $K=1$. The
algorithms in this paper allow splittings (so $K > 1$ in general) and
assume that $>$ is a lexicographic order.

To motivate our approach, assume that the algebraic set $V(I)$ is in
{\em shape position}, that is, the coordinate $X_n$ separates the
points of $V(I)$. Then, the Shape Lemma~\cite{GiMo89} implies that the
Gr\"obner basis of the radical $\sqrt{I}$ for the lexicographic order
$X_1 > \cdots > X_n$ has the form $\langle
X_1-G_1(X_n),\dots,X_{n-1}-G_{n-1}(X_n),P(X_n)\rangle$, for some
squarefree polynomial $P$, and some $G_1,\dots,G_{n-1}$ of degrees
less than $\deg(P)$. The polynomials $P$ and $G_1,\dots,G_{n-1}$ can
be deduced from the values $(\ell(X_n^i))_{0 \le i \le 2D}$ and
$(\ell(X_j X_n^i))_{1 \le j < n, 0 \le i < D}$, for a randomly chosen
linear form $\ell: Q \to \K$, in time
$O\tilde{~}(D)$~\cite{BoSaSc03}. The algorithms in the latter
reference use baby steps / giant steps techniques for the calculation
of the values of $\ell$.

Similar ideas were developed in~\cite{FaMo17}; the algorithms in this
reference make no assumption on $I$ but may fail in some cases, then
falling back on the FGLM algorithm. For instance, if $I$ itself
(rather than $\sqrt{I}$) is known to have a lexicographic Gr\"obner
basis of the form $\langle
X_1-H_1(X_n),\dots,X_{n-1}-H_{n-1}(X_n),Q(X_n)\rangle$, the algorithms
in \cite{FaMo17} recover this basis, also by considering values of
linear forms $\ell_i:Q \to \K$. A key remark made in
that reference is that the values of the linear forms $\ell_i$ that we need can be
computed efficiently by exploiting the sparsity of the multiplication
matrices $\mM_1,\dots,\mM_n$; this sparsity is then analyzed, assuming
the validity of a conjecture due to
Moreno-Soc\'ias~\cite{MorenoSocias91}.
These techniques are related as well to Rouillier's Rational
Univariate Representation algorithm~\cite{Rouillier99}, which uses
values of a specific linear form $Q \to \K$ called the {\em trace}.
However, computing the trace (that is, its values on the monomial
basis of $Q$) is non-trivial, and using random choices instead makes
it possible to avoid this issue.

In this paper, we work in the continuation
of~\cite{BoSaSc03}. Assuming $V(I)$ is in shape position, the results
in that reference allow us to compute the Gr\"obner basis of
$\sqrt{I}$, and our goal here is to recover Gr\"obner
bases corresponding to a decomposition of $I$ as 
stated above. Following~\cite{FaMo17,BeBoFa16}, we discuss the relation of
this question to instances of the following problem: given sequences
$\bu_1,\dots,\bu_s$ in $\scrS$, find the Gr\"obner basis of their {\em
  annihilator} $\ann(\bu_1,\dots,\bu_s)\subset \K[X_1,\dots,X_n]$, for
a target order $>$. The annihilator, discussed in the next section, is a
polynomial ideal corresponding to the linear relations which
annihilate all sequences.

A direct approach to solve the FGLM problem using such techniques
would be to pick initial conditions at random; knowing multiplication
matrices modulo $I$ allows us to compute the values of a sequence
$\bu$, for which $I$ is contained in $\ann(\bu)$. If $I = \ann(\bu)$
holds, computing sufficiently many values of $\bu$ and feeding them
into an algorithm such as Sakata's~\cite{Sakata90} would solve our
problem. This is often, but not always, possible: there exists a
sequence $\bu$ for which $I = \ann(\bu)$ if and only if
$Q=\K[X_1,\dots,X_n]/I$ is a {\em Gorenstein} ring, a notion going back
to~\cite{Macaulay34,Groebner35} (see e.g.~\cite[Prop.\,5.3]{BrCoMoTs10} for a
proof of the above assertion). This is for instance the case if $I$ is a
complete intersection, or if $I$ is radical over a perfect field \cite{eisenbud};
however, an ideal such as $I=\langle X_1^2,X_1X_2,X_2^2\rangle \subset
\K[X_1,X_2]$ is not Gorenstein.

To remedy this, we may have to use more than one sequence, so as to be
able to recover $I$ as $I=\ann(\bu_1,\dots,\bu_s)$. However,
proceeding directly in this manner, we do not expect the algorithm to
be significantly better than applying directly the FGLM algorithm (the
techniques we will use for computing annihilators follow essentially
the same lines as the FGLM algorithm itself). We will see that
starting from the Gr\"obner basis of $\sqrt{I}$, we will be able to
decompose $I$ into e.g. primary components (assuming we allow the use
of factorization algorithms over $\K$), and that our approach is
expected to be competitive in those cases where the multiple
components of $I$ have low degrees.

\smallskip\noindent{\rm\bf Acknowledgements.} We thank the reviewers for
their remarks and suggestions. The third author is supported by an NSERC
Discovery Grant.

\section{Generalities on sequences and their annihilators}\label{ssec:linforms}



Define the shift operators $s_1,\dots,s_n$ on $\scrS$ in the obvious manner, by
setting $s_i (\bu) = (u_{m + e_i})_{m \in \N^n}$, where $e_1,\dots,e_n$ are the
unit vectors. This  makes $\scrS$ a $\K[X_1,\dots,X_n]$-module, by setting $f
\cdot \bu = f(s_1,\dots,s_n)(\bu)$. For $f= \sum_m f_m \bX^m$, the
entries of $f \cdot \bu$ are thus $( \langle \bu \mid \bX^m f \rangle)_{m \in \N^n}$,
where we write $\bX^m=X_1^{m_1} \cdots X_n^{m_n}$ and $\langle \bu \mid f \rangle
= \sum_{m'} f_{m'} u_{m'} $.
To a sequence $\bu = (u_m)_{m \in \N^n}$ in $\scrS$, we can then associate its
{\em annihilator} $\ann(\bu)$, defined as the ideal of all polynomials $f$ in
$\K[X_1,\dots,X_n]$ such that $f\cdot \bu=0$.  If we consider several sequences
$\bu_1,\dots,\bu_s$ in $\scrS$, we then define $\ann(\bu_1,\dots,\bu_s) =
\ann(\bu_1) \cap \cdots \cap \ann(\bu_s)$.

We will also occasionally discuss {\em kernels} of sequences. For
$\bu\in\scrS$, the kernel ${\rm ker}(\bu)$ is the $\K$-vector space formed by
all polynomials $f$ in $\K[X_1,\dots,X_n]$ such that $\langle \bu \mid
f\rangle=0$; this is not an ideal in general. If we consider several sequences
$\bu_1,\dots,\bu_s$, we will write $\ker(\bu_1,\dots,\bu_s) = \ker(\bu_1) \cap
\cdots \cap \ker(\bu_s)$.



Let $I$ be a zero-dimensional ideal in $\K[X_1,\dots,X_n]$.  Define
the residue class ring $Q=\K[X_1,\dots,X_n]/I$ and let
$\D=\deg(I)=\dim_\K(Q)$.  Consider also the dual $Q^*={\rm hom}_\K (Q,\K)$. To a
linear form $\ell$ in $Q^*$, we associate the sequence
$\bu_\ell$ defined by $\bu_\ell=(\ell(\bX^{m}
\bmod I))_{m \in \N^n}$.

For any linear form $\ell$ on $Q$, and any $g$ in $Q$, define the
linear form $g \cdot \ell\in Q^*$ by $(g \cdot \ell) (h) = \ell(gh)$.
This induces a $Q$-module structure on $Q^*$, and we remark that we
have the equality $g \cdot \bu_\ell =\bu_{(g \bmod I)\cdot \ell}$ for
any $g$ in $\K[X_1,\dots,X_n]$.  Following~\cite{Shoup95} (where it is
described with $n=1$), we call this operation {\em
  transposed product}.

For $\ell$ in $Q^*$, we can then define $\ann_Q(\ell)$ as the set of all
$g$ in $Q$ such that $g \cdot \ell = 0$; this is an ideal of $Q$. The
following lemma clarifies the relation between $\ann(\bu_\ell) \subset
\K[X_1,\dots,X_n]$ and $\ann_Q(\ell) \subset Q$; it implies that
$\ann(\bu_\ell)$ is generated by $I$ and any element of $\ann_Q(\ell)$ lifted to
$\K[X_1,\dots,X_n]$.

\begin{Lemma}\label{lemma:inclusion}
  With notation as above, for $f$ in $\K[X_1,\dots,X_n]$, $f$ is in
  $\ann(\bu_\ell)$ if and only if $f \bmod I$ is in $\ann_Q(\ell)$.
\end{Lemma}
\begin{proof}
  Take $f$ in $\K[X_1,\dots,X_n]$. Then $f$ is in $\ann(\bu_\ell)$ if
  and only if $f \cdot \bu_\ell = 0$, that is, if and only if $\bu_{(f
    \bmod I)\cdot \ell} = 0$, if and only if $(f \bmod
  I)\cdot \ell$ itself is zero.\qed
\end{proof}

When $Q^*$ is a free $Q$-module of rank one, we say that $Q$ is a
Gorenstein ring, and that $I$ is Gorenstein. In this case, there
exists a linear form $\lambda$ such that $Q^*= Q \cdot \lambda$; by
the previous lemma, $\ann(\bu_\lambda)=I$.  Conversely, if
$\ann(\bu_\lambda)=I$, $\ann_Q(\lambda) = \{0\}$, so that $Q^*= Q
\cdot \lambda$ (and $Q^*$ is free of
rank one).  For instance, it is known that if $I$ is radical, or $I$ a
complete intersection, then $I$ is Gorenstein. On the other hand, if
$I = \langle X_1^2, X_1 X_2, X_2^2\rangle$, the inclusion $I \subset
\ann(\bu_\ell)$ is strict for any linear form $\ell$. Using several
sequences, we can however always recover $I$.

\begin{Lemma}\label{lemma:Dgen}
Let $\ell_1,\dots,\ell_\D$ be linearly independent in $Q^*$, and let $\bu_1,\dots,\bu_\D$ be the corresponding
sequences. Then $\ann(\bu_1, \dots,\bu_\D)=\ker(\bu_1,\dots,\bu_D)=I$.
\end{Lemma}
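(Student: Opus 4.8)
The plan is to prove the two equalities $\ann(\bu_1,\dots,\bu_\D) = \ker(\bu_1,\dots,\bu_\D) = I$ by a chain of inclusions, using the fact that the $\ell_i$ form a basis of $Q^*$ (since $\dim_\K Q^* = \dim_\K Q = \D$) to translate dual vanishing conditions into honest membership in $I$. First I would observe the trivial inclusions: $\ann(\bu_1,\dots,\bu_\D) \subseteq \ker(\bu_1,\dots,\bu_\D)$, because any $f$ with $f \cdot \bu_i = 0$ in particular has its leading entry $\langle \bu_i \mid f \rangle = 0$; and $I \subseteq \ann(\bu_1,\dots,\bu_\D)$, because for $f \in I$ and any $g \in \K[X_1,\dots,X_n]$ we have $\langle \bu_i \mid \bX^m f \rangle = \ell_i(\bX^m f \bmod I) = \ell_i(0) = 0$ for all $m$, so $f \cdot \bu_i = 0$. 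It then suffices to prove $\ker(\bu_1,\dots,\bu_\D) \subseteq I$, which will close the circle of inclusions.

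For that last inclusion, take $f \in \ker(\bu_1,\dots,\bu_\D)$, i.e. $\langle \bu_i \mid f \rangle = \ell_i(f \bmod I) = 0$ for every $i \in \{1,\dots,\D\}$. Since $\ell_1,\dots,\ell_\D$ are linearly independent in the $\D$-dimensional space $Q^*$, they form a basis of $Q^*$, and a vector in $Q$ annihilated by every element of a basis of the dual $Q^*$ must be zero. Hence $f \bmod I = 0$, i.e. $f \in I$. This gives $\ker(\bu_1,\dots,\bu_\D) \subseteq I \subseteq \ann(\bu_1,\dots,\bu_\D) \subseteq \ker(\bu_1,\dots,\bu_\D)$, so all three coincide.

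There is essentially no serious obstacle here: the only point requiring a moment's care is the duality fact that $\bigcap_{i} \ker_Q(\ell_i) = \{0\}$ when the $\ell_i$ span $Q^*$, which is the standard statement that the natural pairing $Q \times Q^* \to \K$ is nondegenerate on a finite-dimensional $\K$-vector space (so the annihilator in $Q$ of all of $Q^*$ is trivial). Everything else is a direct unwinding of the definitions of $\ann$, $\ker$, the shift action, and the correspondence $\langle \bu_{\ell_i} \mid f \rangle = \ell_i(f \bmod I)$ already recorded before Lemma~\ref{lemma:inclusion}. I would also remark in passing that Lemma~\ref{lemma:inclusion} can be invoked to phrase the argument entirely in terms of $\ann_Q(\ell_i) \subseteq Q$, but the direct computation above is short enough that this is optional.
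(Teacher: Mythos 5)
Your proof is correct and follows essentially the same route as the paper: the trivial inclusions $I \subseteq \ann(\bu_1,\dots,\bu_\D) \subseteq \ker(\bu_1,\dots,\bu_\D)$, then $\ker(\bu_1,\dots,\bu_\D) \subseteq I$ via duality. The paper makes the duality step explicit by expanding the reduced $f$ in the basis of $Q$ dual to $\ell_1,\dots,\ell_\D$ and applying each $\ell_i$ to read off its coordinates, which is exactly the nondegeneracy fact you invoke abstractly.
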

\begin{proof}
   Note first that the inclusion $I \subset\ann(\bu_1, \dots,\bu_\D)= \ann(\bu_1) \cap \cdots
   \cap \ann(\bu_\D)$ is a direct consequence of
   Lemma~\ref{lemma:inclusion}, and that $\ann(\bu_1, \dots,\bu_\D)$ 
   is contained in $\ker(\bu_1, \dots,\bu_\D)$.
   For the converse, let $\omega_1,\dots,\omega_\D$ be the basis of
   $Q$ dual to $\ell_1,\dots,\ell_\D$. Suppose that $f$ is in
   $\ker(\bu_1, \dots,\bu_\D)$, and assume without loss of generality
   that $f$ has been reduced by $I$, so that $f$ is a linear
   combination of the form $f_1 \omega_1 + \cdots + f_\D
   \omega_\D$. Fix $i$ in $1,\dots,\D$ and apply $\ell_i$ to $f$; we
   obtain $f_i$. On the other hand, because $f$ is in $\ker(\bu_i)$, $\ell_i(f)$ must vanish. So we are done.\qed
\end{proof}

We may however need less than $D$ linear forms, as explained in the
following discussion, which generalizes the comments we made in the
Gorenstein case.

Let $B=(b_1,\dots,b_D)$ be a monomial basis of $Q$. Given a linear
form $\ell$ in $Q^*$, we define $K_{\ell}$ as the $D\times D$ matrix
whose $(i,j)$th entry is $\ell(b_i b_j)$; this is the matrix of the
mapping $f \in Q \mapsto f\cdot \ell \in Q^*$, so that its nullspace
is $\ann_Q(\ell)$. More generally, given a positive integer $s$ and
linear forms $\ell_1,\dots,\ell_s$, we define
$K_{\ell_1,\dots,\ell_s}$ as the $D \times sD$ matrix obtained as the
concatenation of $K_{\ell_1},\dots,K_{\ell_s}$; this is the matrix of
the mapping $(f_1,\dots,f_s) \in Q^s \mapsto f_1 \cdot \ell_1 + \cdots
+ f_s \cdot \ell_s \in Q^*$.

\begin{Lemma}\label{lemma:genmod}
 For any linear forms $\ell_1,\dots,\ell_s$ all in $Q^*$, $\ann(\bu_{\ell_1},
 \dots, \bu_{\ell_s}) = I$ if and only if $(\ell_1,\dots,\ell_s)$ are
 $Q$-module generators of $Q^*$.
\end{Lemma}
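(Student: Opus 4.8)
The plan is to reduce the statement to a linear-algebra condition on the matrix $K_{\ell_1,\dots,\ell_s}$ via Lemma~\ref{lemma:inclusion}, and then recognize that condition as surjectivity of the $Q$-module map $(f_1,\dots,f_s)\in Q^s \mapsto f_1\cdot\ell_1 + \cdots + f_s\cdot\ell_s \in Q^*$. The key bridge is that $\ann(\bu_{\ell_1},\dots,\bu_{\ell_s}) = \ann(\bu_{\ell_1})\cap\cdots\cap\ann(\bu_{\ell_s})$, and by Lemma~\ref{lemma:inclusion} each $\ann(\bu_{\ell_i})$ is the preimage under reduction mod $I$ of $\ann_Q(\ell_i)\subset Q$. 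Since $I\subset\ann(\bu_{\ell_1},\dots,\bu_{\ell_s})$ always holds, the equality $\ann(\bu_{\ell_1},\dots,\bu_{\ell_s}) = I$ holds if and only if $\ann_Q(\ell_1)\cap\cdots\cap\ann_Q(\ell_s) = \{0\}$ in $Q$.

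First I would make the above reduction explicit: $f\bmod I$ lies in every $\ann_Q(\ell_i)$ exactly when $f$ lies in every $\ann(\bu_{\ell_i})$, so $\ann(\bu_{\ell_1},\dots,\bu_{\ell_s})/I = \bigcap_i \ann_Q(\ell_i)$, and this common annihilator vanishes iff $\ann(\bu_{\ell_1},\dots,\bu_{\ell_s})=I$. Next, I would identify $\bigcap_i \ann_Q(\ell_i)$ with the kernel of the map $\Phi\colon Q \to (Q^*)^s$ sending $g \mapsto (g\cdot\ell_1,\dots,g\cdot\ell_s)$: indeed $g$ kills all $\ell_i$ under the $Q$-action precisely when $g\cdot\ell_i = 0$ for each $i$. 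So we must show $\ker\Phi = \{0\}$ iff $(\ell_1,\dots,\ell_s)$ generate $Q^*$ as a $Q$-module.

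The core of the argument is a duality/dimension count. The map $\Phi$ and the map $\Psi\colon Q^s \to Q^*$, $(f_1,\dots,f_s)\mapsto \sum_i f_i\cdot\ell_i$, are transpose to one another with respect to the natural pairing between $Q$ and $Q^*$ (and between $Q^s$ and $(Q^*)^s$): one checks that the matrix of $\Psi$ is $K_{\ell_1,\dots,\ell_s}$ and the matrix of $\Phi$ is its transpose, using the symmetry $\ell(b_ib_j)=\ell(b_jb_i)$. Over a field, a matrix has trivial kernel iff its transpose has full row rank, i.e.\ iff the transpose is surjective; since $\dim_\K Q = \dim_\K Q^* = D$, we get $\ker\Phi = \{0\}$ iff $\mathrm{rk}(K_{\ell_1,\dots,\ell_s}) = D$ iff $\Psi$ is surjective. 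Finally, $\Psi$ being surjective means every element of $Q^*$ is of the form $\sum_i f_i\cdot\ell_i$ with $f_i\in Q$, which is exactly the statement that $\ell_1,\dots,\ell_s$ generate $Q^*$ as a $Q$-module.

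The main obstacle, if any, is bookkeeping rather than mathematical: one must be careful that the pairing identifying $\Phi$ with the transpose of $\Psi$ is the correct one, i.e.\ that $\langle g\cdot\ell, h\rangle = \langle \ell, gh\rangle = \langle h\cdot\ell, g\rangle$, so that $\Phi$ really is the transpose of $\Psi$ under $Q^* \cong (Q^*)^{**}$ identified via $B$; the symmetry $b_ib_j = b_jb_i$ in $Q$ is what makes $K_\ell$ symmetric and makes this work cleanly. Once that identification is in place, the equivalence is the standard fact that a linear map between finite-dimensional spaces of equal dimension is injective iff its transpose is surjective, combined with $\dim_\K Q^* = \dim_\K Q = D$. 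Alternatively, one can avoid matrices entirely and argue directly: if the $\ell_i$ generate $Q^*$ and $g\cdot\ell_i = 0$ for all $i$, then $g\cdot\ell = 0$ for every $\ell\in Q^*$, in particular $\ell(g) = 0$ for all $\ell$, forcing $g = 0$; conversely if they do not generate, the submodule $\sum_i Q\cdot\ell_i$ is proper, and a nonzero linear functional on $Q^*$ vanishing on it corresponds (again by finite-dimensional duality, $Q^{**}\cong Q$) to a nonzero $g\in Q$ with $\ell_i(gh) = 0$ for all $h$ and all $i$, i.e.\ $g\in\bigcap_i\ann_Q(\ell_i)$.
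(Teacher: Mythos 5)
Your proof is correct and follows essentially the same route as the paper: reduce via Lemma~\ref{lemma:inclusion} to the condition $\ann_Q(\ell_1)\cap\cdots\cap\ann_Q(\ell_s)=\{0\}$, then use the symmetry of each $K_{\ell_i}$ to identify this intersection with the nullspace of the transpose of $K_{\ell_1,\dots,\ell_s}$, whose full rank $D$ is exactly surjectivity of $(f_1,\dots,f_s)\mapsto\sum_i f_i\cdot\ell_i$, i.e.\ generation of $Q^*$. Your closing matrix-free argument via $Q^{**}\cong Q$ is a valid (and slightly cleaner) alternative, but the core reasoning coincides with the paper's proof.
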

\begin{proof}
   $(\ell_1,\dots,\ell_s)$ are $Q$-module generators of $Q^*$ if and
  only if $K_{\ell_1,\dots,\ell_s}$ has rank $D$, if and only if
  $K_{\ell_1,\dots,\ell_s}^\perp$ has a trivial nullspace.  The nullspace
  of this matrix is the intersection of those of the matrices
  $K_{\ell_1}^\perp,\dots,K_{\ell_s}^\perp$. All these matrices are symmetric,
  and we saw that for all $i$, the nullspace of
  $K_{\ell_i}^\perp=K_{\ell_i}$ is $\ann_Q(\ell_i)$; thus, the condition above is
  equivalent to $\ann_Q(\ell_1) \cap \cdots \cap \ann_Q(\ell_{s})=\{0\}$.
  Lemma~\ref{lemma:inclusion} shows that this is the case if and only 
  if $\ann(\bu_{\ell_1}) \cap \cdots \cap \ann(\bu_{\ell_s}) = I$.\qed
\end{proof}

\begin{Prop}\label{prop:tau}
  There exists a unique integer $\tau \le D$ such that for a generic
  choice of linear forms $(\ell_1,\dots,\ell_\tau)$, with all $\ell_i$
  in $Q^*$, the sequence of ideals $(\ann(\bu_{\ell_1}, \dots,
  \bu_{\ell_t}))_{1 \le t \le \tau}$ is strictly decreasing,
  with $\ann(\bu_{\ell_1}, \dots, \bu_{\ell_\tau}) =
  I$.
\end{Prop}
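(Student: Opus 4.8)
The plan is to build the generic sequence of linear forms one at a time and track the ideal $\ann(\bu_{\ell_1},\dots,\bu_{\ell_t})$, using Lemma~\ref{lemma:genmod} to translate everything into statements about the $Q$-submodule of $Q^*$ generated by $\ell_1,\dots,\ell_t$. Write $M_t = Q\cdot\ell_1 + \cdots + Q\cdot\ell_t \subset Q^*$ for the submodule generated by the first $t$ forms. By Lemma~\ref{lemma:inclusion} (applied componentwise) one has $\ann(\bu_{\ell_1},\dots,\bu_{\ell_t}) = I$ exactly when $M_t = Q^*$, and more generally the ideal $\ann(\bu_{\ell_1},\dots,\bu_{\ell_t})$ shrinks precisely when $M_t$ grows: indeed $\ann(\bu_{\ell_1},\dots,\bu_{\ell_t})/I = \bigcap_i \ann_Q(\ell_i)$ is the annihilator in $Q$ of the module $M_t$, and for a finite-length module over the Artinian ring $Q$, a strict inclusion $M_t \subsetneq M_{t+1}$ forces a strict inclusion $\ann_Q(M_{t+1}) \subsetneq \ann_Q(M_t)$ — this last implication needs a short argument and I expect it to be the only real subtlety (see below).

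First I would define $\tau$ intrinsically as the minimal number of generators of $Q^*$ as a $Q$-module, i.e. $\tau = \dim_\K(Q^*/\m_1 Q^* ) + \cdots$ summed over the local factors — equivalently, by Nakayama, the vector-space dimension of $Q^*\otimes_Q (Q/\mathrm{rad}(Q))$, or simply $\tau = \min\{ s : \exists\, \ell_1,\dots,\ell_s \text{ generating } Q^* \}$. This is a well-defined integer, and $\tau \le D = \dim_\K Q^*$ since $D$ elements always suffice (Lemma~\ref{lemma:Dgen}). The genericity statement then has two halves. For the bound $\ann(\bu_{\ell_1},\dots,\bu_{\ell_\tau}) = I$ on a generic tuple: the set of $\tau$-tuples $(\ell_1,\dots,\ell_\tau) \in (Q^*)^\tau$ for which $M_\tau = Q^*$ is exactly the locus where the $D \times \tau D$ matrix $K_{\ell_1,\dots,\ell_\tau}$ (as in the text) has full rank $D$; its entries are bilinear in the coordinates of the $\ell_i$, so non-full-rank is a Zariski-closed condition, and it is a proper closed subset because $\tau$ was chosen so that some tuple achieves full rank. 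Hence a generic tuple works.

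For the strictly-decreasing half: I would show that for $1 \le t < \tau$, a generic choice of $\ell_{t+1}$, given generic $\ell_1,\dots,\ell_t$ that already make $M_t$ "as large as generically possible", satisfies $M_t \subsetneq M_{t+1}$, equivalently $\ell_{t+1} \notin M_t$. If $M_t = Q^*$ already held for generic $\ell_1,\dots,\ell_t$, then $t$ forms would generically generate $Q^*$, contradicting minimality of $\tau$ (the generic rank of $K_{\ell_1,\dots,\ell_t}$ would be $D$, so some specialization also has rank $D$). Thus for $t<\tau$ the generic $M_t$ is a proper $\K$-subspace of $Q^*$, so a generic $\ell_{t+1}$ lies outside it, giving the strict inclusion $M_t \subsetneq M_{t+1}$, and therefore — modulo the Artinian-annihilator lemma mentioned above — the strict inclusion $\ann(\bu_{\ell_1},\dots,\bu_{\ell_{t+1}}) \subsetneq \ann(\bu_{\ell_1},\dots,\bu_{\ell_t})$. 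Intersecting finitely many such generic conditions (one per $t$, plus the full-rank condition at $t=\tau$) keeps us in a dense open set, which proves the proposition; uniqueness of $\tau$ is immediate since the length of a strictly decreasing chain ending at $I$ and starting from $\ann(\bu_{\ell_1})$ is forced. The main obstacle is the implication $M_t \subsetneq M_{t+1} \Rightarrow \ann_Q(M_{t+1}) \subsetneq \ann_Q(M_t)$: this can fail for general modules, so I would lean on the fact that $Q$ is Artinian Gorenstein-dual-like, more precisely that $Q^*$ is a faithful $Q$-module and $Q$ is a finite-dimensional $\K$-algebra, and use the perfect pairing between $Q$ and $Q^*$ to argue that $\ann_Q(M) = (M^{\perp})$ viewed appropriately, so that $M \mapsto \ann_Q(M)$ is an inclusion-reversing bijection between $Q$-submodules of $Q^*$ and ideals of $Q$ — making the strict-to-strict implication automatic.
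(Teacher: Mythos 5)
Your argument is correct in substance and reaches the same conclusion, but it takes a genuinely different route for the ``strictly decreasing'' half. The paper defines $\tau$ as the smallest $t$ for which the matrix $K_{L_1,\dots,L_t}$ with indeterminate entries has rank $D$ over $\K(L_{1,1},\dots,L_{D,D})$, takes as its \emph{only} genericity condition that $K_{\ell_1,\dots,\ell_\tau}$ has rank $D$, and then gets strict decrease for free by a drop-one-form argument: if $\ann(\bu_{\ell_1},\dots,\bu_{\ell_t})=\ann(\bu_{\ell_1},\dots,\bu_{\ell_{t+1}})$ for some $t<\tau$, then the $\tau-1$ forms obtained by deleting $\ell_{t+1}$ still have annihilator $I$, so $K_{L_1,\dots,L_{\tau-1}}$ already has rank $D$, contradicting minimality of $\tau$; no per-step genericity conditions and no duality lemma are needed. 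You instead define $\tau$ intrinsically as the minimal number of $Q$-module generators of $Q^*$ (equivalent to the paper's definition when $\K$ is infinite, which is the natural setting for ``generic''), impose one extra generic condition per step ($\ell_{t+1}\notin M_t$), and then need the inclusion-reversing correspondence between $Q$-submodules of $Q^*$ and ideals of $Q$. That correspondence, which you flag as the main obstacle, does hold and is proved exactly as you sketch: under the perfect pairing $(g,\ell)\mapsto\ell(g)$ one has $\ann_Q(M)=M^\perp$ for any submodule $M$ (one inclusion from $(g\cdot\ell)(1)=\ell(g)$, the other from $\ell(gh)=(h\cdot\ell)(g)$ and $h\cdot\ell\in M$), and $\perp$ is a strict-inclusion-reversing bijection by finite dimensionality; so strict growth of $M_t$ does force strict decrease of the annihilators. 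The one step you gloss over is that ``$\ell_{t+1}\notin M_t$'' is not by itself a Zariski-open condition on the tuple (the rank of $K_{\ell_1,\dots,\ell_t}$ is only lower semicontinuous): to make your ``intersect finitely many generic conditions'' argument rigorous you should restrict to the open locus where $K_{\ell_1,\dots,\ell_t}$ attains its generic rank $r_t<D$, on which your condition becomes the open, nonempty condition that the matrix augmented by the coordinate vector of $\ell_{t+1}$ has rank at least $r_t+1$. With that patch your proof is complete; the paper's version is leaner in that a single open condition at $t=\tau$ suffices, while yours buys a more intrinsic description of $\tau$ and makes the module-theoretic mechanism (growth of $M_t$ versus shrinkage of the annihilator) explicit.
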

\begin{proof}
  Remark first that if $\tau$ exists with the properties above, it 
  is necessarily unique. Let $(L_{1,1},\dots,L_{1,D}),\dots,(L_{D,1},\dots,L_{D,D})$ be new
  indeterminates, let $\L=\K(L_{1,1},\dots,L_{D,D})$ and define the
  matrices $K_{L_1},\dots,K_{L_D}$ as follows. Let $Q_\L=Q \otimes_\K
  \L$; this allows us to define the linear forms $L_1,\dots,L_D$ in
  $Q_\L^*$ by $L_t(b_j) = L_{t,j}$, for $1 \le t \le D$; then
  $K_{L_t}$ is the matrix with entries $L_t(b_i b_j)$. The entries
  of $K_{L_t}$ are linear forms in $L_{t,1},\dots,L_{t,D}$.

  Define $K_{L_1,\dots,L_t}$ as we did for $K_{\ell_1,\dots,\ell_t}$.
  Then, for any linear forms $\ell_1,\dots,\ell_t$ in $Q^*$, the matrix
  $K_{\ell_1,\dots,\ell_t}$ is obtained by evaluating
  $K_{L_1,\dots,L_t}$ at $L_{t,j} = \ell_t(b_j)$, for all $t,j$.
  The rank of $K_{\ell_1,\dots,\ell_t}$ (over $\K$) is at most that of 
  $K_{L_1,\dots,L_t}$ (over $\L$).

  We can then let $\tau$ be the smallest integer such that the matrix
  $K_{L_1,\dots,L_\tau}$ has full rank $D$. Such an index exists, and
  is at most $D$, since by Lemma~\ref{lemma:Dgen} (and by the remarks of 
  the above paragraph) $K_{L_1,\dots,L_D}$ has rank $D$.

  Let $\ell_1,\dots,\ell_\tau$ be such that
  $K_{\ell_1,\dots,\ell_\tau}$ has rank $D$ (this is our genericity
  condition); in this case, by the previous lemma,
  $\ann(\bu_{\ell_1},\dots,\bu_{\ell_\tau}) = I$. To conclude, it
  suffices to prove that the sequence of ideals $(\ann(\bu_{\ell_1},
 \cdots,\bu_{\ell_t}))_{1 \le t \le \tau}$ is strictly
  decreasing.   Suppose it is not the case, so that $\ann(\bu_{\ell_1}, \dots
,\bu_{\ell_t}) = \ann(\bu_{\ell_1}, \dots,\bu_{\ell_{t+1}})$ for some $t < \tau$. Then, 
$\ann(\bu_{\ell_1}, \dots,\bu_{\ell_t},\bu_{\ell_{t+2}}, \dots,\bu_{\ell_\tau}) =I.$
  Let us define
  $\ell'_1=\ell_1,\dots,\ell'_t=\ell_t,\ell'_{t+1}=\ell_{t+2},\dots,\ell'_{\tau-1}=\ell_\tau$.
  Then, we have $\ann(\bu_{\ell'_1}, \dots,\bu_{\ell'_{\tau-1}}) = I$, so that 
  $K_{\ell'_1,\dots,\ell'_{\tau-1}}$ has rank $D$. This in turn
  implies (by the discussion above) that 
  $K_{L_1,\dots,L_{\tau-1}}$ has rank $D$, a contradiction.\qed
\end{proof}

If $Q$ is a local algebra with maximal ideal $\m$, we can define the
{\em socle} of $Q$ as the $\K$-vector space of all elements $f$ in $Q$
such that $\m f=0$.  For instance,   if $Q$ is local, the integer $\tau$ in the previous lemma is the
  dimension of the socle of $Q$.
(we omit the proof, since we will not use this result in the rest of the paper).


\section{Computing annihilators of sequences}

Consider sequences $(\bu_1,\dots,\bu_t)$ with $\bu_i \in \scrS$ for
all $i$, let $J$ be the annihilator $\ann(\bu_1,\dots,\bu_t)
\subset \K[X_1,\dots,X_n]$, and suppose that it has dimension
zero; our goal is to compute a Gr\"obner basis of it. We first review an
algorithm due to Marinari, M\"oller and Mora~\cite{MaMoMo93}, then
introduce a modification of it that relaxes some of its
assumptions. As a result, the algorithms in this section work under
slightly different assumptions, and feature slightly different
runtimes.

An algorithm with cost $(n t \deg(J))^{O(1)}$ would be highly
desirable, but we are not aware of any such result. Most approaches
(ours as well) involve reading a number of values of $\bu_1,\dots,\bu_t$ and looking
for dependencies between the columns of what is often called a
generalized Hankel matrix, built using these values; the delicate
question is how to control the size of the matrix.

Consider for instance the case $t=1$, $\langle \bu_1 \mid X_1^{m_1} \cdots X_n^{m_n}
\rangle=1$ for $m_1+\cdots+m_n < \delta$ and $\langle \bu_1 \mid X_1^{m_1} \cdots X_n^{m_n} \rangle=0$
otherwise.  The annihilator $J=\ann(\bu_1)$ admits the
lexicographic Gr\"obner basis
$\langle X_1-X_n,\dots,X_{n-1}-X_n,X_n^\delta\rangle$, so we have $\deg(J)=\delta$;
on the other hand, this sequence takes ${\deg(J)+n-1} \choose n$
non-zero values, so taking them all into account leads us to an
exponential time algorithm.


In the case $t=1$, Mourrain in~\cite{mourrain-goren-alg} associates a Hankel operator 
to a sequence
such that the kernel of the Hankel operator corresponds to the annihilator of 
the sequence. Algorithm 2 in this paper computes a border basis 
for the kernel of such a Hankel operator, taking as input its values
over a finite set of monomials.
As in the FGLM algorithm, this algorithm looks for linear dependencies between
the monomials in the border of already computed linearly independent
monomials.  
However, for examples as in the previous paragraph, we
are not aware of how to avoid taking into account up to ${\deg(J)+n-1}
\choose n$ values.


Several algorithms were also proposed in \cite{BeBoFa16} for computing
an annihilator $\ann(\bu_1)$, and partly extended to arbitrary $t$
in~\cite{BeFa16}.  A first algorithm relies on the Berlekamp-Massey
Algorithm, by means of a change of coordinates, which may require\ an
exponential number of value of $\bu_1$. The other algorithms extend
the idea of FGLM, considering maximal rank sub-matrices of a
truncated multi-Hankel matrix to compute a basis for the quotient algebra
and a Gr\"obner basis. An algorithm with certified outcome
(Scalar-FGLM) is presented; it considers the values of $\bu_1$ at all
monomials up to a given degree $\simeq\deg(J)$, so the issue pointed out
above remains. An ``adaptive'' version uses fewer values of the
sequence, but may fail in some cases (the conditions that ensure 
success of this algorithm seem to be close to the genericity assumptions
we introduce in Subsection~\ref{ssec:algo2}).
A comparison of Scalar-FGLM and Sakata's algorithm 
is presented in~\cite{beFa17}.



\subsection{A first algorithm}\label{ssec:algo1}

The first solution we discuss requires a strong assumption (written
${\sf H}_1$ below): for any $i$ and for any monomial $b$ in
$X_1,\dots,X_n$, $b \cdot \bu_i$ is in the $\K$-span of
$(\bu_1,\dots,\bu_t)$; as a result, the annihilator $J$ of
$(\bu_1,\dots,\bu_t)$ equals the nullspace $\ker(\bu_1,\dots,\bu_t)$.
For this situation, Marinari, M\"oller and Mora gave
in~\cite{MaMoMo93} an algorithm that compute a Gr\"obner basis of
$J$, for any order (for definiteness, we refer here to their second
algorithm); it is an extension of both the Buchberger-M\"oller
interpolation algorithm and the FGLM change of order algorithm.

Assumption ${\sf H}_1$ above implies that $\deg(J) \le t$, and the
runtime of the algorithm, expressed in terms of $n$ and $t$, is
$O(nt^3)$ operations in $\K$, together with the computation of all
values $\langle \bu_i \mid b \rangle$, $1 \le i \le t$, for $O(nt)$
monomials $b$.  These evaluations are done in incremental order, in
the sense that for any monomial $b$ for which we need all $\langle \bu_i
\mid b\rangle$, there exists $j \in \{1,\dots,n\}$ such
that $b = X_j b'$ and all $\langle \bu_i \mid b'\rangle$ are known.

We will need the following property of this algorithm. Suppose
$(\bu_1,\dots,\bu_t)$ is a subsequence of a larger family of sequences
$(\bu_1,\dots,\bu_{t'})$ that satisfies ${\sf H}_1$, but that
$(\bu_1,\dots,\bu_t)$ itself may or may not, and that $(\bu_1,\dots,\bu_t)$ and
$(\bu_1,\dots,\bu_{t'})$ have different $\K$-spans. Then, on input
$(\bu_1,\dots,\bu_t)$, the algorithm will still run its course, and at least
one of the elements in the output will be a polynomial $g$ that does not belong
to $\ann(\bu_1,\dots,\bu_{t'})$.

\subsection{An algorithm under genericity assumptions}\label{ssec:algo2}

We now give a second algorithm for computing
$J=\ann(\bu_1,\dots,\bu_t)$, whose runtime is polynomial in
$n,t,D=\deg(J)$ and an integer $B\le \deg(J) $ defined below.  We do
not assume that ${\sf H}_1$ holds, but we will require other
assumptions; if they hold, the output is the lexicographic Gr\"obner
basis $G$ of $J$ for the order $X_1 > \cdots > X_n$. Our first assumption is:
\begin{description}
\item [${\sf H}_2.$] We are given an integer $B$ such that the minimal
  polynomial of $X_j$ in $\K[X_1,\dots,X_n]/J$ has degree at most $B$
  for all $j$.
\end{description}
For $j$ in $1,\dots,n$, we will denote by $J_j$ the ideal
$\ann(\pi_j(\bu_1),\dots,\pi_j(\bu_t)) \subset \K[X_{j},\dots,X_n]$,
where for all $i$, $\pi_j(\bu_i)$ is the sequence $\N^{n-j+1}\to \K$
defined by $\langle \pi_j(\bu_i) \mid (m_j,\dots,m_n)\rangle =\langle
\bu_i \mid (0,\dots,0,m_j,\dots,m_n)\rangle$ for all $(m_j,\dots,m_n)$
in $\N^{n-j+1}$; in particular, $J_1=J$. We write $\deg(J_j)=D_j\le D$, we
let $G_{j}$ be the lexicographic Gr\"obner basis of $J_j$,
and we let $\mathscr{B}_j$ be the corresponding monomial basis of
$\K[X_j,\dots,X_n]/J_j$.

We can then introduce our genericity property; by contrast with ${\sf
  H}_2$, we will not necessarily assume that it holds, and discuss the
outcome of the algorithm when it does not. We denote this property by
${\sf H}_3(j)$, for $j=1,\dots,n-1$.
\begin{description}
\item [${\sf H}_3(j).$] We have the equality $J_j \cap \K[X_{j+1},\dots,X_n]=J_{j+1}$.
\end{description}
Remark that the inclusion $J_j \cap \K[X_{j+1},\dots,X_n] \subset
J_{j+1}$ always holds.

Suppose that for some $j$ in $1,\dots,n$, we have computed a sequence
of monomials $\mathscr{B}'_{j+1}$ in $\K[X_{j+1},\dots,X_n]$ (if
$j=n$, we let $\mathscr{B}'_{j+1}=(1)$). Since we will use them
repeatedly, we define properties ${\sf P}$ and ${\sf P}'$ as follows,
the latter being stronger than the former.
\begin{description}
\item [${\sf P}(j+1).$] The cardinality $D'_{j+1}$ of $\mathscr{B}'_{j+1}$ is at
  most $D_{j+1}$.
\item [${\sf P}'(j+1).$] The equality
  $\mathscr{B}'_{j+1}=\mathscr{B}_{j+1}$ holds.
\end{description}
We describe in the following paragraphs a procedure that computes a
new family of monomials $\mathscr{B}'_j$, and we give conditions under
which they satisfy ${\sf P}(j)$ and ${\sf P}'(j)$.

We call a family of monomials $\mathscr{B}$ in $\K[X_j,\dots,X_n]$
{\em independent} if their images are $\K$-linearly independent modulo
$J_j$ (we call it {\em dependent} otherwise). We denote by
$\mM_{\mathscr{B}}$ the matrix with entries $\langle \bu_i \mid b b'
\rangle$, with rows indexed by $i=1,\dots,t$ and $b'$ in
$\mathscr{C}_{j+1}=\mathscr{B}'_{j+1} \times (1,X_j,\dots,X_j^{B-1})$,
and columns indexed by $b$ in $\mathscr{B}$ (for any monomial $b$ in
$\K[X_j,\dots,X_n]$, $\mM_b$ is the column vector defined similarly).

\begin{Lemma}\label{lemma:rnkM}
  If $\mathscr{B}$ is dependent, the right nullspace of
    $\mM_{\mathscr{B}}$ is non-trivial.
  If both ${\sf P}'(j+1)$ and ${\sf H}_3(j)$ hold, the converse
    is true.
 \end{Lemma}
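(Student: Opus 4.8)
The plan is to translate the statement about the matrix $\mM_{\mathscr{B}}$ into a statement about linear combinations of the monomials in $\mathscr{B}$, using the defining property of the sequences $\bu_1,\dots,\bu_t$ and the two hypotheses. First I would set up notation: a right nullspace element of $\mM_{\mathscr{B}}$ is a tuple $(c_b)_{b\in\mathscr{B}}$, not all zero, with $\sum_{b\in\mathscr{B}} c_b \langle \bu_i \mid b b'\rangle = 0$ for every $i\in\{1,\dots,t\}$ and every $b'\in\mathscr{C}_{j+1}=\mathscr{B}'_{j+1}\times(1,X_j,\dots,X_j^{B-1})$. Writing $g=\sum_{b\in\mathscr{B}} c_b\, b \in \K[X_j,\dots,X_n]$, this says exactly that $\langle \bu_i \mid g b'\rangle=0$ for all $i$ and all $b'\in\mathscr{C}_{j+1}$, i.e.\ $g b' \in \ker(\bu_1,\dots,\bu_t)$ for all $b'\in \mathscr{C}_{j+1}$.

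For the first (easy) direction: if $\mathscr{B}$ is dependent, there is a nonzero tuple $(c_b)$ with $g=\sum_b c_b\, b \equiv 0 \pmod{J_j}$. Then $g$ lies in $J_j = \ann(\pi_j(\bu_1),\dots,\pi_j(\bu_t))$, so $g\cdot \pi_j(\bu_i)=0$ for all $i$; unwinding the definition of $\pi_j$, this forces $\langle \bu_i \mid g b'\rangle = 0$ for every monomial $b'$ in $X_j,\dots,X_n$ — in particular for every $b'\in\mathscr{C}_{j+1}$ — so $(c_b)$ is a nonzero element of the right nullspace. Here I only need the inclusion $J_j\subset\ker$ that follows from the definitions and from Lemma~\ref{lemma:inclusion}-style reasoning; no genericity is used.

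For the converse, assume ${\sf P}'(j+1)$ (so $\mathscr{B}'_{j+1}=\mathscr{B}_{j+1}$ is the monomial basis of $\K[X_{j+1},\dots,X_n]/J_{j+1}$) and ${\sf H}_3(j)$ (so $J_j\cap\K[X_{j+1},\dots,X_n]=J_{j+1}$). Take a nonzero nullspace element, giving $g\in\K[X_j,\dots,X_n]$, not a $\K$-combination forcing $g\equiv 0$ unless $\mathscr{B}$ is dependent — so it suffices to show $g\in J_j$. The idea is to use the bound $B$ on the minimal polynomial degree to "fold" all powers of $X_j$ down: since $\langle\bu_i\mid g X_j^k b''\rangle=0$ for $0\le k<B$ and every $b''\in\mathscr{B}_{j+1}$, and since every monomial in $\K[X_{j+1},\dots,X_n]$ reduces modulo $J_{j+1}$ to a combination of $\mathscr{B}_{j+1}$, hypothesis ${\sf H}_3(j)$ lets me transfer that reduction into $\K[X_j,\dots,X_n]/J_j$ and conclude $\langle \bu_i \mid g X_j^k b'''\rangle = 0$ for \emph{all} monomials $b'''$ in $\K[X_{j+1},\dots,X_n]$, for all $0\le k<B$. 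Then the minimal-polynomial bound ${\sf H}_2$ (implicit via $B$) propagates this to all powers $X_j^k$, $k\ge B$ as well, so $g$ annihilates $\pi_j(\bu_i)$ at every monomial, i.e.\ $g\in J_j$; hence $\mathscr{B}$ is dependent.

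The main obstacle is this last "folding" argument: one must be careful that reducing a monomial $X_{j+1}^{a_{j+1}}\cdots X_n^{a_n}$ modulo $J_{j+1}$ produces a $\K$-linear relation among the values $\langle\bu_i\mid \cdot\rangle$ that is still valid after multiplying through by $g$ and by a power of $X_j$, and that ${\sf H}_3(j)$ is exactly what guarantees the relations defining $J_{j+1}$ lift to relations that hold modulo $J_j$ (rather than merely modulo the a priori larger ideal $J_j\cap\K[X_{j+1},\dots,X_n]$). Once that compatibility is in place, combining it with the $X_j$-degree bound $B$ is routine bookkeeping.
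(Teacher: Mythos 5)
Your proposal is correct and follows essentially the same route as the paper: the easy direction is the observation that a relation modulo $J_j$ among the elements of $\mathscr{B}$ transfers to the columns of $\mM_{\mathscr{B}}$, and the converse rests on the fact that, under ${\sf P}'(j+1)$, ${\sf H}_2$ and ${\sf H}_3(j)$, the set $\mathscr{C}_{j+1}$ spans $\K[X_j,\dots,X_n]/J_j$, so that vanishing of $\langle \bu_i \mid g\,b'\rangle$ on $\mathscr{C}_{j+1}$ forces $g\in J_j$. Your ``folding'' paragraph simply makes explicit the spanning argument that the paper states in one line, including the correct identification of ${\sf H}_3(j)$ as supplying the inclusion $J_{j+1}\subset J_j$ needed to transfer reductions modulo $J_{j+1}$.
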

\begin{proof}
  Any $\K$-linear relation between the elements of $\mathscr{B}$
  induces the same relation between the columns of
  $\mM_{\mathscr{B}}$, and the first point follows.

  By definition, a polynomial $f$ in $\K[X_j,\dots,X_n]$ belongs to
  $J_j$ if and only if it annihilates
  $\pi_j(\bu_1),\dots,\pi_j(\bu_t)$, that is, if $\langle  \pi_j(\bu_i) \mid X_j^{m_j}
  \dots X_n^{m_n} f \rangle=0$ for all
  $(m_j,\dots,m_n)$ in $\N^{n-j+1}$ and all $i=1,\dots,t$. Now,
  assumptions  ${\sf P}'(j+1)$, ${\sf H}_2$ and ${\sf H}_3(j)$ imply that
  $\mathscr{C}_{j+1}$ generates $\K[X_j,\dots,X_n]/J_j$, so that $f$
  is in $J_j$ if and only if $\langle \bu_i \mid b f \rangle=0$, for all $b$ in
  $\mathscr{C}_{j+1}$ and all $i=1,\dots,t$.  \qed
\end{proof}

The following lemma, that essentially follows the argument used in the
proof of the FGLM algorithm~\cite{FaGiLaMo93}, will be useful to
justify our algorithm as well.

\begin{Lemma}\label{lemma:fglm}
  Suppose that $b_1 < \dots < b_u < b_{u+1}$ are the first $u+1$ standard
  monomials of $\K[X_j,\dots,X_n]/J_j$, for the lexicographic order 
  induced by $X_j > \cdots > X_n$, with $b_1=1$. Then for any monomial 
$b$ such that $b_u < b < b_{u+1}$, $\{b_1,\dots,b_u,b\}$ is a dependent 
  family.
\end{Lemma}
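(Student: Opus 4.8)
The plan is to argue exactly as in the FGLM change-of-order algorithm. First I would recall why $b_1 < \dots < b_u < b_{u+1}$ being the first $u+1$ standard monomials of $\K[X_j,\dots,X_n]/J_j$ forces any monomial $b$ with $b_u < b < b_{u+1}$ to be a \emph{non}-standard monomial of $J_j$: indeed the set of standard monomials of a monomial staircase is downward closed under divisibility, hence an order ideal, and in the lexicographic order it is moreover an initial segment only up to the first ``gap''. Since $b$ lies strictly between the $u$-th and $(u+1)$-st standard monomials, it cannot itself be standard, so $b \in \mathrm{in}_{>}(J_j)$, i.e.\ there is $g \in J_j$ whose leading monomial is $b$.

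Next, I would use such a $g$ to exhibit the required linear dependence. Write $g = b - \sum_{k} c_k m_k$ where each $m_k$ is a monomial strictly smaller than $b$ in the lex order; reducing $g$ modulo $J_j$ does not change its class (it is $0$ in the quotient), and reducing each $m_k$ by the Gr\"obner basis $G_j$ rewrites it as a $\K$-linear combination of standard monomials, all of which are $\le m_k < b \le b_u$ — hence all of which lie in $\{b_1,\dots,b_u\}$. Putting this together, $b \equiv \sum_{i=1}^{u} \alpha_i b_i \pmod{J_j}$ for some $\alpha_i \in \K$. This is precisely a $\K$-linear dependence among the images of $b_1,\dots,b_u,b$ in $\K[X_j,\dots,X_n]/J_j$, so $\{b_1,\dots,b_u,b\}$ is a dependent family in the sense defined just above.

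The only mild subtlety — the step I would be most careful about — is the claim that $b$ is non-standard. One must check that the monomials smaller than $b$ appearing in the reduction are all among $b_1,\dots,b_u$; this uses that the standard monomials form an order ideal together with the hypothesis that $b_1,\dots,b_u$ are the smallest $u$ of them, so every standard monomial $\le b_u$ is one of the $b_i$. Given that, there is no real obstacle: the argument is a direct transcription of the classical FGLM reasoning, requiring nothing beyond the definition of a Gr\"obner basis and the order-ideal property of staircases. I would therefore keep the written proof to a few lines: identify $g\in J_j$ with leading term $b$, reduce its lower-order terms by $G_j$, and read off the dependence among $\{b_1,\dots,b_u,b\}$.
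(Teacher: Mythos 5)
Your proof is correct, but it takes a genuinely different route from the paper's. You argue directly through standard Gr\"obner machinery: since $b_{u+1}$ is the $(u+1)$-st standard monomial, no standard monomial lies strictly between $b_u$ and $b_{u+1}$, so $b$ is non-standard and hence lies in the initial ideal of $J_j$; some $g\in J_j$ therefore has leading monomial $b$, and reducing its tail by $G_j$ expresses $b$, modulo $J_j$, as a $\K$-linear combination of standard monomials smaller than $b$, which are necessarily among $b_1,\dots,b_u$ --- a dependence with coefficient $1$ on $b$. The paper instead proceeds by induction on $u$, taking a minimal independent counterexample $b$ and splitting into two cases (either $b$ is a leading term of the Gr\"obner basis of $J_j$, or $b=X_e b'$ with $b'$ outside $\{b_1,\dots,b_u\}$), thereby re-deriving by hand the rewriting property that your normal-form argument imports wholesale; that version stays closer to the mechanics of the FGLM correctness proof, making explicit how rewritings propagate under multiplication by a variable, whereas yours is shorter and rests on two textbook facts (non-standard monomials are leading monomials of ideal elements; the normal form of a monomial is supported on standard monomials not exceeding it). One small slip to repair in your write-up: the chain ``$\le m_k < b \le b_u$'' cannot be right, since $b>b_u$ by hypothesis; the correct justification is that every standard monomial $<b$ is $<b_{u+1}$, and since there is no standard monomial in the interval $(b_u,b_{u+1})$ (the same observation you used to show $b$ itself is non-standard), it must be one of $b_1,\dots,b_u$.
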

\begin{proof}
  We prove the result by induction on $u \ge 0$, the case $u=0$ being
  vacuously true. Assuming the claim is true for some index $u \ge 0$,
  we prove it for $u+1$. We proceed by contradiction, and we let $b$
  be the smallest monomial such that $b_u < b < b_{u+1}$ and
  $\{b_1,\dots,b_u,b\}$ is an independent family ($b$ exists by the
  well-ordering property of monomial orders). 

  We will use the fact that any monomial $c$ less than $b$ can be
  rewritten as a linear combination of $b_1,\dots,b_i$, with $b_i <
  c$, for some $i \le u$: if $c < b_u$, this is by the induction
  assumption; if $c=b_u$, this is obvious; if $b_u < c < b$, this is
  by the definition of $b$.

  Now, either $b$ is the leading term of an element in the Gr\"obner
  basis of $J_j$, or it must be of the form $b=X_e b'$, for some
  monomial $b'$ not in $\{b_1,\dots,b_u\}$. We prove that in both
  cases, $b$ can be rewritten as a linear combination of
  $b_1,\dots,b_u$, which is a contradiction.
  In the first case, $b$ rewrites as a linear combination of smaller
  monomials, say $c_1,\dots,c_v$, and by the previous remark, all of
  them can be rewritten as linear combinations of
  $b_1,\dots,b_u$. Altogether, $b$ itself can be rewritten as a linear
  combination of $b_1,\dots,b_u$, a contradiction.
  
  In the second case, $b=X_e b'$, for some monomial $b'$ not in
  $\{b_1,\dots,b_u\}$. As above, $b'$ can be rewritten modulo $J_j$ as
  a linear combination of monomials $b_1,\dots,b_i$, for some $i \le
  u$, with $b_i < b'$. Then, $b=X_e b'$ is a linear combination of
  $X_e b_1,\dots,X_e b_i$. Since $b_i < b'$, we get $X_e b_1 < \cdots
  < X_e b_i < X_e b'=b$, so all of $X_e b_1,\dots,X_e b_i$ can be
  rewritten as linear combinations of $b_1,\dots,b_u$. As a result,
  this is also the case for $b$ itself, so we get a contradiction again.
\qed
\end{proof}

\noindent Suppose that ${\sf P}(j+1)$ holds.
Then, the algorithm at step $j$ proceeds as follows.  We compute the
reduced row echelon form of $\mM_{\mathscr{C}_{j+1}}$. Using
assumption ${\sf P}(j+1)$, this matrix has at most $t B D_{j+1}$ rows
and at most $B D_{j+1}$ columns, and it has rank at most $D_j$ (by the
first item of Lemma~\ref{lemma:rnkM}). This computation can
be done in time $O(t B^2 D_{j+1}^2 D_j) \in O(t B^2 D^3)$. The column
indices of the pivots allow us to define the monomials $\mathscr{B}'_j=(b'_1 <
\cdots < b'_{D'_j})$, for some $D'_j \le D_j$.

\begin{Lemma} Property ${\sf P}(j)$ holds, and if ${\sf P}'(j+1)$ and ${\sf H}_3(j)$ hold, then ${\sf P}'(j)$
  holds.
\end{Lemma}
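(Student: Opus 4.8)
For ${\sf P}(j)$ there is essentially nothing new: $D'_j$ is the number of pivot columns of the reduced row echelon form of $\mM_{\mathscr{C}_{j+1}}$, hence the rank of that matrix, and the first item of Lemma~\ref{lemma:rnkM}, applied to every set of columns, shows that $\K$-linearly independent columns correspond to monomials independent modulo $J_j$, so there are at most $D_j=\deg(J_j)$ of them --- exactly as already noted when $\mathscr{B}'_j$ was defined. So the real content is the second assertion: assuming ${\sf P}'(j+1)$ and ${\sf H}_3(j)$ (and the standing hypothesis ${\sf H}_2$), I would prove that the set of pivot columns is precisely $\mathscr{B}_j$.

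The key preliminary is the inclusion $\mathscr{B}_j\subseteq\mathscr{C}_{j+1}$. Write a standard monomial $m$ of $J_j$ as $m=X_j^k c$ with $c\in\K[X_{j+1},\dots,X_n]$. Since $\mathrm{in}(J_j)$ is a monomial ideal and $c\mid m$, the monomial $c$ is again standard for $J_j$; lying in $\K[X_{j+1},\dots,X_n]$, the elimination property of the lexicographic order together with ${\sf H}_3(j)$ gives $c\in\mathscr{B}_j\cap\K[X_{j+1},\dots,X_n]=\mathscr{B}_{j+1}$. For the exponent $k$: by ${\sf H}_2$ the minimal polynomial $p(X_j)$ of $X_j$ modulo $J=J_1$ has degree at most $B$, and the definitions of $J_j$ and $\pi_j$ give directly $J\cap\K[X_j,\dots,X_n]\subseteq J_j$, so $p\in J_j$ and therefore $X_j^{\deg p}=\mathrm{lt}(p)\in\mathrm{in}(J_j)$ with $\deg p\le B$; hence every standard monomial of $J_j$ has $X_j$-degree at most $B-1$, i.e.\ $k\le B-1$. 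Using ${\sf P}'(j+1)$ this gives $m\in\mathscr{B}_{j+1}\times(1,X_j,\dots,X_j^{B-1})=\mathscr{C}_{j+1}$.

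The rest is the usual FGLM-style bookkeeping. Under ${\sf P}'(j+1)$, ${\sf H}_2$ and ${\sf H}_3(j)$, the two items of Lemma~\ref{lemma:rnkM} together assert that a family of monomials of $\K[X_j,\dots,X_n]$ is independent modulo $J_j$ if and only if the corresponding columns of $\mM$ are $\K$-linearly independent; consequently $\rk(\mM_S)=\dim_\K\mathrm{span}(S\bmod J_j)$ for every finite monomial set $S$, and a column $\mM_b$ with $b\in\mathscr{C}_{j+1}$ is a pivot of the reduced row echelon form (columns ordered by increasing monomial) if and only if $b$ does not lie in the $\K$-span modulo $J_j$ of $\{b'\in\mathscr{C}_{j+1}: b'<b\}$. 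Now any non-standard monomial of $J_j$ lies, modulo $J_j$, in the $\K$-span of the standard monomials strictly below it (this is essentially the content of Lemma~\ref{lemma:fglm}); combined with $\mathscr{B}_j\subseteq\mathscr{C}_{j+1}$, this identifies the $\K$-span modulo $J_j$ of $\{b'\in\mathscr{C}_{j+1}: b'<b\}$ with that of $\{b'\in\mathscr{B}_j: b'<b\}$. Since the standard monomials of $J_j$ are $\K$-independent modulo $J_j$, a monomial $b\in\mathscr{C}_{j+1}$ is a pivot exactly when $b\in\mathscr{B}_j$; hence the set of pivots is $\mathscr{B}_j\cap\mathscr{C}_{j+1}=\mathscr{B}_j$, which is ${\sf P}'(j)$.

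The step I expect to be the main obstacle is the inclusion $\mathscr{B}_j\subseteq\mathscr{C}_{j+1}$: it is the only place where both hypotheses are genuinely combined --- ${\sf H}_2$ to cap the $X_j$-degrees occurring in the staircase of $J_j$, and ${\sf H}_3(j)$ to match the $\K[X_{j+1},\dots,X_n]$-part of that staircase with $\mathscr{B}_{j+1}$ through lexicographic elimination --- and it is where care is needed to transfer the degree bound of ${\sf H}_2$, stated for $X_j$ modulo $J=J_1$, to $X_j$ modulo $J_j$. Everything afterwards is the column/residue dictionary of Lemma~\ref{lemma:rnkM} and the staircase combinatorics of Lemma~\ref{lemma:fglm}.
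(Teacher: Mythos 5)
Your proof is correct and follows essentially the same route as the paper: both identify the pivot columns with the true staircase $\mathscr{B}_j$ by combining the column/monomial dictionary of Lemma~\ref{lemma:rnkM} with the staircase argument of Lemma~\ref{lemma:fglm}, the paper phrasing this as ``$\mathscr{B}_j$ is the lexicographically smallest column basis'' where you characterize the pivots directly. The one point you make explicit that the paper leaves implicit is the inclusion $\mathscr{B}_j\subseteq\mathscr{C}_{j+1}$ (needed even to speak of the columns indexed by $\mathscr{B}_j$), and your derivation of it --- transferring the degree bound of ${\sf H}_2$ from $J$ to $J_j$ via $J\cap\K[X_j,\dots,X_n]\subseteq J_j$, and using the lexicographic elimination property together with ${\sf H}_3(j)$ and ${\sf P}'(j+1)$ for the $\K[X_{j+1},\dots,X_n]$-part --- is a correct and welcome addition.
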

\begin{proof}
  The first item is a restatement of the inequality $D'_j \le D_j$.
  To prove the second item, assuming that ${\sf P}'(j+1)$ and ${\sf
    H}_3(j)$ hold, we deduce from Lemma~\ref{lemma:rnkM} that the
  columns indexed by the genuine $\mathscr{B}_{j}$ form a column basis
  of $\mM_{\mathscr{C}_{j+1}}$, and we claim that it is actually the
  lexicographically smallest column basis (this will prove that 
  $\mathscr{B}_{j}=\mathscr{B}'_{j}$).
  Indeed, write $\mathscr{B}_{j}=(b_1,\dots,b_{D_j})$, and let
  $(f_1,\dots,f_{D_j})$ be another subsequence of $\mathscr{C}_{j+1}$
  whose corresponding columns form a column basis of
  $\mM_{\mathscr{C}_{j+1}}$.  Let $m$ be the smallest index such that
  $b_{m}\ne f_{m}$. Then, applying Lemma~\ref{lemma:fglm} to
  $(b_1,\dots,b_{m-1})$ and $f_{m}$, we deduce that $b_{m} < f_{m}$
  (otherwise, since they are different, we must have $b_{m-1} < f_{m}
  < b_{m}$, which implies that $f_{m}$ is a linear combination of
  $(b_1,\dots,b_{m-1})=(f_1,\dots,f_{m-1})$, a contradiction).  \qed
\end{proof}

Thus, running this procedure for $j=n,\dots,1$, we maintain ${\sf
  P}(j)$; this implies that the running time is $O(n t B^2 D^3)$,
computing the values $\langle \bu_i \mid b\rangle$, for $1 \le i \le
t$, for $O(nB^2D^2)$ monomials $b$ (with the same monotonic property
as in the previous subsection). If ${\sf H}_3(j)$ holds for all $j$,
the second item in the last lemma proves that
$\mathscr{B}'_1=\mathscr{B}_1$, the monomial basis of
$\K[X_1,\dots,X_n]/J$.

Once $\mathscr{B}'_{1}$ is known, we compute and return a family of polynomials
$G'$ defined as follows. We determine the sequence $\Delta$ of elements in
$X_1 \mathscr{B}'_1 \cup \dots \cup X_n
\mathscr{B}'_1-\mathscr{B}'_1$, all of whose factors are in
$\mathscr{B}'_1$ (finding them does not require any operation in $\K$;
this can be done by using e.g. a balanced binary search tree with the
elements of $\mathscr{B}'_1$, using a number of comparisons that is
quasi-linear time in $n D$). Then, we rewrite each column $\mM_{b}$,
for $b$ in $\Delta$, as a linear combination of the form $\sum_{1 \le
  i \le D'_1} c_i \mM_{b'_i}$ and we put $b - \sum_{1 \le i \le D'_1}
c_i b'_i$ in $G'$. If the reduction is not possible, the algorithm
halts and returns {\sf fail}.  Using the reduced row echelon form of
$\mM_{\mathscr{C}_2}$, each reduction takes time $O(D_1^2)\in O(D^2)$ operations
in $\K$, for a total of $ O(nD^3)$.

If ${\sf H}_3(j)$ holds for all $j$, since
$\mathscr{B}_1=\mathscr{B}'_1$, the fact that $G'=G$ follows from
Lemma~\ref{lemma:rnkM}. 
Assume now that $G'$ differs from $G$; we prove that there exists an
element in $G$ not in $J$ (we will use this in our main algorithm to
detect failure cases). Indeed, in this case, $\mathscr{B}'_1$ must be
different from $\mathscr{B}_1$, and since $\mathscr{B}'_1$ has
cardinality at most equal to that of $\mathscr{B}_1$, there exists a
monomial $b$ in $\mathscr{B}_1$ not in $\mathscr{B}'_1$. This in turn
implies that there exists an element $g$ in $G'$ that divides $b$, and
thus with leading term in $\mathscr{B}_1$. Reducing $g$ modulo $G$, we
must then obtain a non-zero remainder, so that $g$ does not belong to
$J$.


\section{Main algorithm}


\subsection{Representing primary zero-dimensional ideals}\label{sec:output}

Let $I$ be a zero-dimensional ideal in $\K[X_1,\dots,X_n]$; we assume
that $I$ is $\m$-primary, for some maximal ideal $\m$, and we write
$D=\deg(I)$. In this paragraph, we briefly mention some possible
representations for $I$ (our main algorithm will compute either one of
these representations).

The first, and main, option we will consider is simply the Gr\"obner
basis $G$ of $I$, for the lexicographic order induced by $X_1 > \cdots
> X_n$.
As an alternative, consider the following construction. Our
assumption on $I$ implies that the minimal polynomial $R$ of $X_n$ in
$\K[X_1,\dots,X_n]/I$ takes the form $R=P^e$, for some irreducible
polynomial $P$ in $\K[Z]$, of degree say $f$ (remark that $R(X_n)$ is
also the last polynomial in $G$). Let $\L=\K[Z]/\langle P\rangle$;
this is a field extension of degree $f$ of $\K$, and the residue class
$\zeta$ of $Z$ in $\L$ is a root of $P$. We then let $I'$ be the ideal
$I + \langle (X_n-\zeta)^{e}\rangle$ in $\L[X_1,\dots,X_n]$, and let
$D'$ be its degree. Then, a second option is to compute the
lexicographic Gr\"obner basis $G'$ of $I'$, for the order $X_1 >
\cdots > X_n$. The following lemma relates $D$ and $D'$.

\begin{Lemma}\label{lemma:DDprime}
  The ideal $I'$ has degree $D'=D / f$. 
\end{Lemma}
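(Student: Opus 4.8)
The plan is to base-change to the algebraic closure and count dimensions one local component at a time. Write $\bar Q = Q\otimes_\K\Kbar$, so that $\dim_{\Kbar}\bar Q = D$, and let $V$ be the (finite) set of maximal ideals of $\bar Q$, so $\bar Q = \prod_{p\in V}\bar Q_p$ with each $\bar Q_p$ a local Artinian ring with residue field $\Kbar$. Setting $Q' = \L[X_1,\dots,X_n]/I'$, we have $D' = \dim_\L Q'$ and, by right-exactness of $-\otimes_\L\Kbar$, $Q'\otimes_\L\Kbar = \bar Q/\langle (X_n-\zeta)^e\rangle$. Thus it suffices to show $\dim_{\Kbar}\bigl(\bar Q/\langle (X_n-\zeta)^e\rangle\bigr) = D/f$.

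First I would determine which local components survive the quotient. In $\bar Q_p$ the image $x$ of $X_n$ satisfies $R(x) = P(x)^e = 0$ (since $R(X_n)\in I$), hence $x$ reduces modulo the maximal ideal of $\bar Q_p$ to some root $a_p$ of $P$. If $a_p\neq\zeta$, then $x-\zeta$ is congruent to the nonzero scalar $a_p-\zeta$ modulo the (nilpotent) maximal ideal, so it is a unit, and $\bar Q_p/\langle (X_n-\zeta)^e\rangle = 0$. If $a_p=\zeta$, write $P = (Z-\zeta)P_1$ over $\Kbar$; here perfectness of $\K$ makes $P$ separable, so $\zeta$ is a simple root and $P_1(\zeta)\neq 0$. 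Then $P_1(x)\equiv P_1(\zeta)\in\Kbar^\times$ modulo the maximal ideal, so $P_1(x)$ is a unit, and from $0 = R(x) = (x-\zeta)^e P_1(x)^e$ we get $(X_n-\zeta)^e = 0$ in $\bar Q_p$, hence $\bar Q_p/\langle (X_n-\zeta)^e\rangle = \bar Q_p$. Consequently $\bar Q/\langle(X_n-\zeta)^e\rangle = \prod_{p\in V_\zeta}\bar Q_p$, where $V_\zeta = \{p\in V : a_p=\zeta\}$.

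Next I would evaluate $\sum_{p\in V_\zeta}\dim_{\Kbar}\bar Q_p$ by a symmetry argument. Since $I$ is $\m$-primary, $\sqrt I = \m$ is maximal, so $V = V(\sqrt I)(\Kbar)$ is a single orbit under $G = \mathrm{Gal}(\Kbar/\K)$; as $G$ acts through semilinear ring automorphisms of $\bar Q$ permuting the idempotents, all $\bar Q_p$ are semilinearly isomorphic and so have a common dimension $D/\#V$. The map $p\mapsto a_p$ is a $G$-equivariant surjection from $V$ onto the set of roots of $P$, which is a single $G$-orbit of size $f$ because $P$ is irreducible and separable; an equivariant surjection of transitive $G$-sets has all fibres of equal size, namely $\#V/f$. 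Hence $\#V_\zeta = \#V/f$, and
\[
\dim_\L Q' \;=\; \sum_{p\in V_\zeta}\dim_{\Kbar}\bar Q_p \;=\; \frac{\#V}{f}\cdot\frac{D}{\#V} \;=\; \frac{D}{f}.
\]

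The main obstacle is the uniformity of the component dimensions together with the equidistribution of the fibres of $p\mapsto a_p$; both reduce to $V(I)(\Kbar)$ being a single Galois orbit, and this is exactly where the hypothesis that $I$ is primary (not merely zero-dimensional) is used. A reader who prefers to avoid $\Kbar$ can carry out the same bookkeeping over $\L$ directly: decompose $Q_\L = Q\otimes_\K\L$ into local rings, observe that their residue fields are the factors of $\kappa\otimes_\K\L$, where $\kappa$ is the residue field of $Q$; check via exactness of $-\otimes_\K\L$ on the associated graded pieces of the $\m$-adic filtration that each local factor has the same length as $Q$; and identify the unique factor on which $X_n$ reduces to $\zeta$, whose residue degree over $\L$ is $[\kappa:\K]/f$. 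This trades the Galois input for the base-change computation on graded pieces.
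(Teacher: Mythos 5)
Your proof is correct, but it takes a genuinely different route from the paper's. The paper stays at the level of ideals over the splitting field $\mathbb{M}$ of $P$: since $P(X_n)^e\in I$ and $P=\prod_i(Z-\zeta_i)$ with distinct roots, the Chinese Remainder Theorem gives $\deg(J_1)+\cdots+\deg(J_f)=D$ for $J_i=I+\langle(X_n-\zeta_i)^e\rangle\subset\mathbb{M}[X_1,\dots,X_n]$, and the $f$ embeddings $\sigma_i:\L\to\mathbb{M}$, $\zeta\mapsto\zeta_i$, together with invariance of degree under extension of scalars, give $\deg(J_i)=\deg(I')$ for every $i$, whence $fD'=D$. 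You instead base-change all the way to $\Kbar$, decompose $\bar Q$ into its local factors at the points of $V(\m)$, identify by a unit/nilpotent argument exactly which factors survive modulo $(X_n-\zeta)^e$, and then count using the transitive action of $\mathrm{Gal}(\Kbar/\K)$: equal dimension of the local factors and equidistribution of the fibres of $p\mapsto a_p$. Both are conjugation arguments after scalar extension, but yours uses the hypothesis that $I$ is $\m$-primary (so that $V$ is a single Galois orbit) and the perfectness of $\K$ in an essential way, whereas the paper's CRT-plus-embeddings argument needs neither the local decomposition nor transitivity on points, only that the minimal polynomial of $X_n$ is a power of an irreducible (hence separable) $P$. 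What your version buys is finer structural information: it exhibits $Q'\otimes_\L\Kbar$ explicitly as the product of the local factors of $\bar Q$ lying over $\zeta$ and shows that all local factors of $\bar Q$ share the dimension $D/\#V$; the paper's version is shorter and marginally more general. One small point worth a word in your write-up: the surjectivity of $p\mapsto a_p$ onto the roots of $P$ follows either from equivariance (its image is a nonempty $G$-stable subset of a single $G$-orbit) or from the fact that the eigenvalues of multiplication by $X_n$ on $\bar Q$ are exactly the $a_p$ while its minimal polynomial is $P^e$.
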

\begin{proof}
  Let $\mathbb{M}$ be the splitting field of $P$ and let
  $\zeta_1,\dots,\zeta_f$ be the roots of $P$ in $\mathbb{M}$.  The ideals
  $J_i=I + \langle (X_n-\zeta_i)^{e}\rangle \subset
  \mathbb{}[X_1,\dots,X_n]$ are such that
  $\deg(J_1)+\cdots+\deg(J_f)=\deg(I)$. On the other hand, there
  exist $f$ embeddings $\sigma_1,\dots,\sigma_f$ of $\L$ into $\mathbb{M}$,
  with $\sigma_i$ given by $\zeta \mapsto \zeta_i$; as a result,
  $\deg(I')=\deg(J_i)$ holds for all $i$, and the claim follows.\qed
\end{proof}

The point behind this construction is to lower the degree of the ideal
we consider, at the cost of working in a field extension of $\K$. This
may be beneficial, as the cost of the main algorithm (which
essentially relies on the one in the previous section) will be a
polynomial of rather large degree with respect to the degree of the
ideal, whereas computation in a field extension such as $\K \to \L$ is
a well-understood task of cost ranging from quasi-linear to quadratic.

Our last option aims at producing a ``simpler'' Gr\"obner basis, by
means of a change of coordinates. For this, we will assume that $X_n$
separates the points of $V(\m)$ (over an algebraic closure of
$\K$). As a result, the ideal $\m$ being maximal, it admits a
lexicographic Gr\"obner basis of the form $\langle
X_1-G_1(X_n),\dots,X_{n-1}-G_{n-1}(X_n),P(X_n)\rangle$. Define
$\xi_1=G_1(\zeta),\dots,\xi_{n-1}=G_{n-1}(\zeta),\xi_n=\zeta$, for
$\zeta \in \L$ as above; then, $(\xi_1,\dots,\xi_n)$ is the unique
zero of $I'$ (in fact, $I'$ is $\m'$-primary, with $\m'=\langle X_1
-\xi_1,\dots,X_n-\xi_n\rangle$). We can then apply the change of
coordinates that replaces $X_i$ by $X_i+\xi_i$ in $I'$, for all $i$,
and call $I''$ the ideal thus obtained (so that $I''$ is generated by
the polynomials $f(X_1+\xi_1,\dots,X_n+\xi_n)$, for $f$ in $I$, and
$X_n^e$). Now, $I''$ is $\m''$-primary, with $\m''=\langle
X_1,\dots,X_n\rangle$; one of our options will be to compute the
Gr\"obner basis $G''$ of $I''$.

\begin{Example}
  Consider the polynomials in $\Q[X_1,X_2]$
\begin{align*}
X_1^2 - 2X_1X_2 - 2X_1 + X_2^2 + 2X_2 + 1,\\
X_1X_2^2 + X_1X_2 + 2X_1 - X_2^3 - 2X_2^2 - 3X_2 - 2,\\
X_2^4 + 2X_2^3 + 5X_2^2 + 4X_2 + 4,
\end{align*}
the last of them being $P(X_2)^2=(X_2^2+X_2+2)^2$, and let $I$ be the
ideal they define. The polynomials above are the lexicographic Gr\"obner basis $G$ of $I$ 
for the order $X_1 > X_2$. Let $\L=\Q[Z]/\langle Z^2+Z+2\rangle$, and let
$\zeta$ be the image of $Z$ in $\L$; then, the ideal $I'=I + \langle
(X_2-\zeta)^2 \rangle$ in $\L[X_1,X_2]$ admits the Gr\"obner
basis $G'$
\begin{align*}
    X_1^2 - 2X_1\zeta - 2X_1 + \zeta - 1,\\
    X_1X_2 - X_1\zeta - X_2\zeta - X_2 - 2,\\
    X_2^2 - 2X_2\zeta - \zeta - 2.
\end{align*}
Here, we have $e=2$, $f=2$, $D=6$ and $D'=3$. The ideal $I$ is $\m$-primary,
where $\m$ admits the Gr\"obner basis $\langle X_1-X_2-1, X_2^2+X_2+2 \rangle$, so that
we have $(\xi_1,\xi_2)=(\zeta+1,\zeta)$, and $I'$ is $\m'$-primary, with
$\m'=\langle X_1-\xi_1,X_2-\xi_2 \rangle$. Applying the change of coordinates
$(X_1,X_2) \leftarrow (X_1+\xi_1,X_2+\xi_2)$, the resulting ideal $I''$ admits
the Gr\"obner basis $G''=\langle X_1^2, X_1X_2, X_2^2 \rangle$,
from which we can readily confirm that it is $\langle X_1,X_2\rangle$-primary.
\end{Example}



\subsection{The algorithm}

We consider a zero-dimensional ideal $I$ in $\K[X_1,\dots,X_n]$. We
assume that we know a monomial basis $B=(b_1,\dots,b_D)$ of
$Q=\K[X_1,\dots,X_n]/I$, so that we let $D=\dim_\K(Q)$, together with
the corresponding multiplication matrices $\mM_1,\dots,\mM_n$ of
respectively $X_1,\dots,X_n$. We assume that the last variable $X_n$
has been chosen generically; in particular, $X_n$ separates the points
of $V=V(I)$.

The algorithm in this section computes a decomposition of $I$ into
primary components $J_1,\dots,J_K$. Each such component $J_k$ will be
given by means of one of the representations described in the previous
subsection; we will emphasize the first of them, the lexicographic
Gr\"obner basis of $J_k$, and mention how to modify the algorithm in
order to obtain the other representations. In order to find the
primary components of $I$, we cannot avoid the use of factorization
algorithms over $\K$; if desired, one may avoid this by relying on
{\em dynamic evaluation techniques}~\cite{D5}, replacing for instance
the factorization into irreducibles used below by a squarefree
factorization (thus producing a decomposition of $I$ into ideals that
are not necessarily primary).  In that case, if one wishes to compute
descriptions such as the second or third ones introduced above,
involving algebraic numbers as coefficients, one should take into
account the possibility of splittings the defining polynomials, as is usual
with this kind of approach (a
complete description of the resulting algorithm, along the lines
of~\cite{DaMoScXi06}, is beyond the scope of this paper).

\smallskip\noindent{\rm\bf The ideal $I$ and its primary decomposition.}
Let $P_{\min{}} \in \K[X_n]$ be the minimal polynomial of $X_n$ in
$Q$, let $P$ be its squarefree part, and let polynomials
$G_1,\dots,G_{n-1}$ in $\K[X_n]$, with $\deg(G_i) < \deg(P)$ for all
$i$, be such that $\sqrt{I}$ admits the lexicographic Gr\"obner basis
$ \langle X_1-G_1(X_n),\dots,X_{n-1}-G_{n-1}(X_n),P(X_n)\rangle$. We write
$P_{\min{}}=P_1^{e_1} \cdots P_K^{e_K}$, with the $P_k$'s pairwise
distinct irreducible polynomials in $\K[X_n]$ and $e_k \ge 1$ for
all $k$. In particular, the factorization of $P$ is $P_1 \cdots P_K$; we
write $f_k = \deg(P_k)$ for all $k$.

Correspondingly, let $V_1,\dots,V_K$ be the $\K$-irreducible
components of $V$ and for $k=1,\dots,K$, let $\m_k$ be the maximal
ideal defining $V_k$; hence, the reduced lexicographic Gr\"obner basis
of $\m_k$ is $\langle X_1- (G_1 \bmod P_k),\dots,X_{n-1}-(G_{n-1}
\bmod P_k),P_k \rangle$.  We can then write $I = J_1 \cap \cdots \cap
J_K$, with $J_k$ $\m_k$-primary for all $k$; note that
the ideal $J_k$ is defined by $J_k = I + \langle P_k^{e_k}
\rangle$. In what follows, we explain how to compute a Gr\"obner basis
of this ideal by means of the results of the previous section.
Without loss of generality, assume that $L$ is such that $e_k = 1$ for
$k > L$ and $e_k \ge 2$ for $k=1,\dots,L$. The fact that $X_n$ is a
generic coordinate implies that for $k > L$, $J_k=\m_k$, so there is
nothing left to do for such indices; hence, we are left with showing
how to use the algorithms of the previous section to compute Gr\"obner
bases of $J_1,\dots,J_L$. 

\smallskip\noindent{\rm\bf Data representation.} 
An element $f$ of $Q$ is represented by the column vector $\vv_f$ of
its coordinates on the basis $B$, whereas a linear form $\ell:Q\to\K$
is represented by the row vector
$\vw_\ell=[\ell(b_1),\dots,\ell(b_D)]$. Computing $\ell(f)$ is then
done by means of the dot product $\vw_\ell \cdot \vv_f$. Multiplying
$f$ by $X_i$ amounts to computing $\mM_i \vv_f$, and the linear form
$X_i \cdot \ell: g \mapsto \ell(X_i g)$ is obtained by computing the
vector $\vw_{X_i \cdot \ell} = \vw_\ell \mM_i$.

In terms of complexity, we assume that multiplying any matrix $\mM_i$
by a vector (either on the left or on the right) can be done in $\vm$
operations in $\K$. The naive bound on $\vm$ is $O(D^2)$, but the
sparsity properties of these matrices often result in much better
estimates; see~\cite{FaMo17} for an in-depth discussion of this question.
On the other hand, we assume $D \le \vm$.

\smallskip\noindent{\rm\bf Computing $P_{\min{}}$ and $G_1,\dots,G_{n-1}$.}
First, we compute generators of $\sqrt{I}$. We choose a random linear
form $\ell_1 : Q \to \K$, and we compute the values
$(\ell_1(X_n^i))_{0 \le i < 2D}$ and $\ell_1(X_1
X_n^i),\dots,\ell_1(X_{n-1} X_n^i)$, for ${0 \le i < D}$.  This is
done by computing $1,X_n,\dots,X_n^{2D-1}$ by repeated applications of
$\mM_n$, which amounts to $O(D \vm)$ operations, 
and doing the
corresponding dot products with $\ell,X_1 \cdot \ell,\dots,X_{n-1}
\cdot \ell$. For the latter, we have to compute the linear 
forms $X_i\cdot \ell$ in $O(n \vm)$ operations, then do a 
$D \times D$ by $D \times (n+1)$ matrix product, which costs
$O(nD^2)$ operations (without using fast linear algebra).

Using the algorithm given in~\cite{BoSaSc03}, given these values, we
can compute the minimal polynomial $P_{\min}$, as well as the
polynomials $G_1,\dots,G_{n-1}$ describing $V(I)$ in $O\tilde{~}(D)$
operations in $\K$. Then, as per the discussion in the preamble, we
assume that we have an algorithm for factoring polynomials over $\K$,
so that $(P_1,e_1),\dots,(P_K,e_K)$ and $P$ can be deduced from $P_{\min}$.

\smallskip\noindent{\rm\bf Constructing the orthogonal of $J_k$.} 
For $k=1,\dots,K$, we will write $Q_k=\K[X_1,\dots,X_n]/J_k$.
Any linear form $\ell:Q \to\K$ induces a linear form $\varphi_k(\ell):
Q_k \to \K$, defined as follows.

Let $T_k$ be the polynomial $P_{\min{}}/P_k^{e_k}$. For $f$ in $Q_k$,
let $\hat f$ be any lift of $f$ to $\K[X_1,\dots,X_n]$, and define
$\varphi_k(\ell)(f)=\ell(T_k \hat f \bmod I)$. Notice that this
expression is well-defined: indeed, any two lifts of $f$ differ by an
element $\delta$ of $J_k = I + \langle P_k^{e_k}\rangle$, so that $T_k
\delta$ is in $I$, since $T_k P_k^{e_k} = P_{\min{}}$ is.

\begin{lemma}
The mapping $\varphi_k: Q^* \to Q_k^*$ is $\K$-linear and onto. 
\end{lemma}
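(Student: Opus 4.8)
The plan is to check $\K$-linearity directly from the formula, then establish surjectivity via a dimension count together with the Chinese Remainder decomposition $Q \cong Q_1 \times \cdots \times Q_K$. First I would verify linearity: for $f, g \in Q_k$ with lifts $\hat f, \hat g$, the sum $\hat f + \hat g$ is a lift of $f + g$, and scalars pull through, so $\varphi_k(\ell)(f+g) = \ell(T_k(\hat f + \hat g) \bmod I) = \varphi_k(\ell)(f) + \varphi_k(\ell)(g)$, using that $\ell$ and reduction modulo $I$ are $\K$-linear; likewise $\varphi_k$ is linear in $\ell$ since $\ell \mapsto \ell(T_k \hat f \bmod I)$ is. This part is routine because well-definedness has already been argued in the text.

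For surjectivity, I would exploit the isomorphism $Q \cong Q_1 \times \cdots \times Q_K$ coming from $I = J_1 \cap \cdots \cap J_K$ with the $J_k$ pairwise coprime (as the $P_k^{e_k}$ are pairwise coprime). Under this isomorphism, multiplication by $T_k = P_{\min{}}/P_k^{e_k}$ kills the components $Q_{k'}$ for $k' \neq k$ (since $P_{k'}^{e_{k'}} \mid T_k$) and acts on the $k$-th component $Q_k$ as multiplication by the image of $T_k$, which is a unit in $Q_k$ because $T_k$ is coprime to $P_k$ and $Q_k$ is $\m_k$-local with $P_k \in \m_k$. Hence the map $f \in Q_k \mapsto (T_k \hat f \bmod I) \in Q$ is injective, with image exactly the $k$-th summand (the ideal $\mathrm{ann}_Q(T_{k'} : k' \neq k)$, or equivalently the image of the idempotent $\epsilon_k$). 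Its transpose is precisely $\varphi_k$, so $\varphi_k$ is onto: given $\mu \in Q_k^*$, extend it along the splitting $Q \to Q_k$ and precompose suitably — concretely, pick $\ell \in Q^*$ so that $\ell$ restricted to the $k$-th summand inverts multiplication by $T_k$ and realizes $\mu$, which is possible since that summand is exactly the image of $Q_k$ under $f \mapsto T_k \hat f \bmod I$.

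The cleanest way to finish is the linear-algebra argument: the composite $Q_k \xrightarrow{\;f \mapsto T_k \hat f \bmod I\;} Q \xrightarrow{\;\ell\;} \K$ is $\varphi_k(\ell)(f)$, and since the first arrow is injective with a direct-sum complement in $Q$ (namely $\bigoplus_{k' \neq k} Q_{k'}$ under the CRT identification), every linear functional on $Q_k$ is obtained by composing the projection $Q \to Q_k$-summand with a suitable $\ell$; equivalently, a linear map of vector spaces has surjective transpose if and only if it is injective, and we have just shown the relevant map is injective. Thus $\varphi_k$ is onto.

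The main obstacle is making precise that multiplication by $T_k$ is invertible on the local component $Q_k$: this requires observing that $T_k \bmod J_k$ is a unit, which follows because $T_k$ is coprime to $P_k^{e_k}$ so there exist $A, B$ with $A T_k + B P_k^{e_k} = 1$, whence $A T_k \equiv 1 \pmod{J_k}$ as $P_k^{e_k} \in J_k$. Once this is in hand, the surjectivity is immediate from the CRT splitting, and the only care needed is bookkeeping the identification of $\varphi_k$ with the transpose of the injection $Q_k \hookrightarrow Q$, $f \mapsto T_k \hat f \bmod I$.
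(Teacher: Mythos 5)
Your proof is correct, and it is essentially a dualized version of the paper's argument rather than a reproduction of it. The paper proves surjectivity constructively: from $A_kT_k+B_kP_k^{e_k}=1$ it gets $A_kT_k\equiv 1 \pmod{J_k}$, and for $\lambda\in Q_k^*$ it simply exhibits the preimage $\ell(f)=\lambda(A_kf\bmod J_k)$ and checks $\varphi_k(\ell)=\lambda$ in one line. You instead observe that $\varphi_k$ is the transpose of the well-defined $\K$-linear map $\psi_k:Q_k\to Q$, $f\mapsto T_k\hat f\bmod I$, prove $\psi_k$ injective from the same B\'ezout identity (if $T_k\hat f\in I$ then $\hat f=A_kT_k\hat f+B_kP_k^{e_k}\hat f\in J_k$), and invoke the finite-dimensional duality fact that the transpose of an injection is a surjection. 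Both proofs hinge on exactly the same identity; the paper's is shorter and hands you the preimage explicitly, while yours yields extra structural information (the image of $\psi_k$ is the $k$-th summand in the Chinese Remainder decomposition $Q\cong Q_1\times\cdots\times Q_K$, equivalently the image of the idempotent $\epsilon_k$). Note that the CRT decomposition and the complement $\bigoplus_{k'\neq k}Q_{k'}$ are not actually needed for your clean finish: injectivity of $\psi_k$ alone gives surjectivity of the transpose, so your middle paragraph could be dropped; likewise, the local-ring argument that $T_k$ is a unit in $Q_k$ is subsumed by the B\'ezout computation you state at the end, which is precisely the paper's key step.
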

\begin{proof}
  Linearity is clear by construction; we now prove that $\varphi_k$ is
  onto. Let indeed $A_k,B_k$ in $\K[X_n]$ be such that $A_k T_k + B_k
P_k^{e_k}=1$ (they exist by definition of $T_k$). Consider $\lambda$
in $Q_k^*$, and define $\ell$ in $Q^*$ by $\ell(f)=\lambda(A_k f \bmod
J_k)$.  Since $ P_k^{e_k}$ vanishes modulo $J_k$, we have $A_k T_k=1
\bmod J_k$, so $\ell (f)=\lambda(f \bmod J_k)$ holds for all $f$ in
$Q$; this in turn readily implies that $\varphi_k(\ell)=\lambda$.\qed
\end{proof}

We saw in Subsection~\ref{ssec:linforms} how to associate to an
element $\ell \in Q^*$ a sequence $\bu_\ell \in \scrS$, by letting
$\langle \bu_\ell \mid m \rangle = \ell(m \bmod I)$. The following
tautological observation will then be useful below: for $\ell$ in
$Q^*$, the sequences $\bu_{T_k \cdot \ell}$ and
$\bu_{\varphi_k(\ell)}$ coincide, where $\bu_{\varphi_k(\ell)}$ is
defined starting from the linear form $\varphi_k(\ell) \in
Q_k^*$. Indeed, take any monomial $m$ in $X_1,\dots,X_n$; then,
$\varphi_k(\ell)(m \bmod J_k)$ is defined as $\ell(T_k m \bmod I)$,
which is equal to $(T_k \cdot \ell)(m \bmod I)$.
We will use this remark to compute values of
$\varphi_k(\ell)$, through the computation of values of $T_k \cdot
\ell$ instead. 

In algorithmic terms, computing a single transposed product by a
polynomial $T(X_n)$, that is, $T \cdot \ell$, can be done using
Horner's rule, using $d$ right-multiplications by~$\mM_n$, with
$d=\deg(T)$; this takes $O(d\vm)$ operations in $\K$.  If several
transposed products are needed, such as for instance computing $T_1
\cdot \ell,\dots,T_L \cdot \ell$ as below, the cost becomes
$O(LD\vm)$, using $D$ as an upper bound on
$\deg(T_1),\dots,\deg(T_L)$. One can actually do better, by computing
inductively and storing the products $X_n^i \cdot \ell$, for
$i=0,\dots,D-1$.  Then, the coefficients of $T_1 \cdot \ell,\dots,T_L
\cdot \ell$ can be computed as the product of the $D \times d'$ matrix
of coefficients of $(X_n^i \cdot \ell)_{0 \le i < D}$ by the matrix of
coefficients of $T_1,\dots,T_L$; the cost is $O(D\vm + LD^2)$.

One can improve this idea further using {\em subproduct tree}
techniques, since the polynomials $T_1,\dots,T_L$ have a very specific
structure.  Recall that we defined $T_k =
P_{\min{}}/P_k^{e_k}$. Hence, all of $T_1,\dots,T_L$ share a common
factor $R=P_{L+1}^{e_{L+1}} \cdots P_{K}^{e_{K}}$.  We can then treat the
common factor $R$ separately, by writing $T_k=R U_k$ for all these
indices $k$, and computing $U_1 \cdot \ell',\dots,U_L \cdot \ell'$
instead, with $\ell'=R \cdot \ell$. The cost to compute $\ell'$ is
$O(D\vm)$.

The polynomials $U_1,\dots,U_L$ have no common factor anymore,
but they are all of the form $P_1^{e_1} \cdots P_{k-1}^{e_{k-1}}
P_{k+1}^{e_{k+1}} P_{L}^{e_{L}}$. We can then define a subproduct tree
as in~\cite[Chapter~10]{GaGe13}, that is, a binary tree $\cal T$ having the
polynomials $(P_k^{e_k})_{1 \le k \le L}$ at its leaves, and where
each node is labeled by the product of the polynomials at its two
children. We proceed in a top-down manner: we associate $\ell'$ to the
root of the tree, and recursively, if a linear form $\lambda$ has been
assigned to an inner node of $\cal T$, we associate to each of its children
the transposed product of $\lambda$ by the polynomial labelling the
other child.  At the leaves, this gives us $U_L \cdot \ell',\dots,U_K
\cdot \ell'$, as claimed. The total cost at each level is $O(D\vm)$,
for a total of $O(D\log(L)\vm)$.

\smallskip\noindent{\rm\bf The main procedure, using the algorithm of Subsection~\ref{ssec:algo1}.}
The first version of the main procedure determines the Gr\"obner bases
of $J_L,\dots,J_K$ by applying the algorithm of
Subsection~\ref{ssec:algo1} to successive families of linear forms.

We maintain a list of ``active'' indices $S$, initially set to
$S=(1,\dots,L)$; these are the indices for which we are not done
yet. The algorithm proceeds iteratively; at step $i\ge 1$, we pick a
random linear form $\ell_i \in Q^*$, and compute all $\ell_{k,i}=T_k
\cdot \ell_i$, for $k$ in $S$. We then apply the algorithm of
Subsection~\ref{ssec:algo1} to
$(\bu_{\ell_{k,1}},\dots,\bu_{\ell_{k,i}})$, for all $k$
independently, and obtain families of polynomials $G_{k,i}$ as
output. For verification purposes, we also choose a random $\ell_0 \in
Q^*$, and compute the corresponding $\ell_{k,0}$.

Write $D_k=\deg(J_k)$, for $k \le K$. Combining
Lemma~\ref{lemma:Dgen} and the equality
$\bu_{\ell_{k,i}}=\bu(\varphi_k(\ell_i))$ seen above, we deduce
that for a generic choice of $\ell_1,\dots,\ell_{D_k}$,
$(\ell_{k,1},\dots,\ell_{k,D_k})$ satisfies assumption ${\sf H}_1$
needed for our algorithm, and that $G_{k,D_k}$ is a Gr\"obner basis of
$J_k$. In view of the discussion in Subsection~\ref{ssec:algo1}, for
any $i < D_k$, $G_{k,i}$ contains a polynomial $g$ not in $J_k$. Since
$\ell_0$ was chosen at random, $\ell_{k,0}$ will in general not vanish
at $g$; hence, at every step $i$, we evaluate $\ell_{k,0}$ at all
elements of $G_{k,i}$, and continue the algorithm for this index $k$
if we obtain a non-zero value; else, we remove $k$ from our list $S$,
and append $G_{k,i}$ to the output.

In terms of complexity, we will have to apply the process in the
previous paragraph to $\mu$ linear forms
$\ell_{D_1},\dots,\ell_{\mu}$, with $\mu=\max_{k \le L}(D_k)$, for a
cost $O(\mu D\vm\log(L))$. Then, we will exploit a feature of
Marinari-M\"oller-Mora's second algorithm: it is incremental in the
number of linear forms given as input, so that the overall runtime of
our $D_k$ successive invocations is the same as if we called it once
with $\ell_1,\dots,\ell_{D_k}$. For a given $k$, it adds up to $O(n
D_k^2 \vm + nD_k^3)=O(n D_k^2 \vm)$, where the first term describes
the cost of the evaluations of the linear forms we need (since 
each new value requires the product by one of the $\mM_i$). Overall, the
runtime is $O(\mu D\log(L)\vm + n \sum_{k \le L} D_k^2 \vm)$.  This
supports the comment made in the introduction: if the degrees of the
multiple components are small, say $D_k = O(1)$ for all $k$, this is
$O(n D \log(D) \vm)$.

\smallskip\noindent{\rm\bf Using the algorithm of Subsection~\ref{ssec:algo2}.}
We can adapt our main procedure in order to use the algorithm of
Subsection~\ref{ssec:algo2} instead; the main difference is that we
expect to use fewer linear forms.

For $k \le K$, let indeed $t_k \le D_k$ be the maximum of
$\tau(Q_{k,\ge 1}),\dots,\tau(Q_{k,\ge n})$, with $Q_{k, \ge
  j}=\K[X_j,\dots,X_n]/J_k \cap\K[X_j,\dots,X_n]$, and with $\tau$
defined as in Proposition~\ref{prop:tau} (for instance, if $I$ is a
complete intersection ideal, $t_k=1$ for all $k$). The main
algorithm proceeds as in the previous variant: we choose random linear
forms $\ell_1,\dots$ and deduce $\ell_{k,i}=T_k \cdot \ell_i$; we will
compute the Gr\"obner basis $G_k$ of $J_k$ as
$\ann(\bu_{\ell_{k,1}},\bu_{\ell_{k,2}},\dots)$.  We claim that we
only need $t_k$ linear forms $\ell_1,\dots,\ell_{t_k}$ in order
to recover $G_k$.

To confirm this, we consider again assumptions ${\sf H}_2$ and ${\sf
  H}_3$ made in Subsection~\ref{ssec:algo2}. The appendix
of~\cite{BoSaSc03} implies that the minimal polynomial of any variable
$X_i$ in $Q_k$ has degree at most $e_k$, except for $X_n$. We already
know the minimal polynomial $P_k^{e_k}$ of $X_n$ in $Q_k$, so we skip
the first pass in the loop of the algorithm of
Subsection~\ref{ssec:algo2}, and use the value $B=e_k$.

Regarding ${\sf H}_3$, we prove that if $\ell_1,\dots,\ell_{t_k}$
are chosen generically, assumption ${\sf H}_3(j)$ holds for
$j=1,\dots,n$. For $i \ge 1$ and $j=1,\dots,n$, define $\ell_{k,i,j}$
as the linear form in $Q^*_{k,\ge j}$ induced by restriction of
$\varphi_k(\ell_{i}) \in Q^*_k$.  Applying Proposition~\ref{prop:tau}
to $Q_{k,\ge j}$ shows that there exists a Zariski open $\Omega_{k,j}
\subset {Q_{k,\ge j}^*}^{t_k}$ such that if
$\ell_{k,1,j},\dots,\ell_{k,t_k,j}$ are in $\Omega_{k,j}$, they
generate $Q_{k,\ge j}^*$ as a $Q_{k,\ge j}$-module, and thus
(Lemma~\ref{lemma:genmod}) $J_k \cap
\K[X_j,\dots,X_n]=\ann(\bu_{\ell_{k,1,j}},\dots,\bu_{\ell_{k,t_k,j}})$. If
this is true for some index $k$ and all $j$, ${\sf H}_3(j)$ follows as
well for these indices. Now, the mapping $\Delta_{k,j}:
(\ell_1,\dots,\ell_{t_k})\mapsto
(\ell_{k,1,j},\dots,\ell_{k,t_k,j})$ is $\K$-linear and onto (we
proved above that $(\ell_1,\dots,\ell_{t_k})\mapsto
(\varphi_k(\ell_{1}),\dots,\varphi_k(\ell_{t_k}))$ is onto, and the
surjectivity of the projection is straightforward), so that the
preimage $\Delta_{k,j}^{-1}(\Omega_{k,j})$ is Zariski open in
${Q^*}^{t_k}$ for all $k,j$. In other words, for generic
$\ell_1,\dots,\ell_{t_k}$, ${\sf H}_3(j)$ holds for all $j$ and all
$k$, so the algorithm of Subsection~\ref{ssec:algo2} computes $G_k$
for all $k$.

We still need to discuss what happens when applying this algorithm to
$\ell_{k,1},\dots,\ell_{k,i}$ for some $i < t_k$. In this case, as
per the discussion in Subsection~\ref{ssec:algo2}, either we get
generators of $\ann(\bu_{\ell_{k,1}},\dots,\bu_{\ell_{k,i}})$, which is a strict
superset of $J_k$, or at least one of the polynomials in the output
does not belong to $\ann(\bu_{\ell_{k,1}},\dots,\bu_{\ell_{k,i}})$. In any case,
the output contains at least one polynomial $g$ not in $J_k$, so we
can use the same stopping criterion as in the previous paragraph,
using a linear form $\ell_0$ to test termination.

To control the complexity, at the $i$th step, we now use linear forms
$\ell_1,\dots,\ell_{2^i}$; as a result, we need to go up to $i=t$,
with $t=\max_k(t_k)$, and the overall runtime is proportional to that
at $i=t$. The cost of preparing the linear forms $\ell_{k,i}$ is $O(t
D\vm\log(L))$, and the cost of computing annihilators is $O(n t
\sum_{k \le L} e_k^2 D_k^2 \vm)$.  The first term is better than the
equivalent term for our first algorithm, but the second one is
obviously worse. On the other hand, the analysis in
Subsection~\ref{ssec:algo2} can be refined significantly, and possibly
lead to improved estimates.

\smallskip\noindent{\rm\bf Using a scalar extension.} To conclude, we 
discuss (without giving proofs) how to put to practice the idea
introduced in Subsection~\ref{sec:output} of computing Gr\"obner bases
of ideals of smaller degree over larger base fields, in the context (for definiteness) of
the algorithm of the previous paragraph.

Let $\ell_{k,1},\dots,\ell_{k,t_k}$ be defined as before, let
$\bu_{\ell_{k,1}},\dots,\bu_{\ell_{k,t_k}}$ be the corresponding
sequences, and assume that these linear forms are such that the
annihilator of $\bu_{\ell_{k,1}},\dots,\bu_{\ell_{k,t_k}}$ is
$J_k$. Let further $\L_k$ be the field extension $\K[Z]/P_n(Z)$, and
let $\zeta_k$ be the residue class of $Z$ in $\L_k$ Then, the
annihilator of $J'_k=J_k + \langle X_n -\zeta_k \rangle^{e_k}$ in
$\L[X_1,\dots,X_n]$ has degree $D_k/f_k$ by Lemma~\ref{lemma:DDprime},
so we might want to compute it instead of $J_k$. To accomplish this,
we need sequences whose annihilator would be $J'_k$; we do this
following the same strategy as above. Define $S_k=P_k/(X_n-\zeta_k)
\in \L_k[X_n]$, as well as the linear form $\ell'_{k,i} = S_k^{e_k}
\cdot \ell_{k,i}: \L[X_1,\dots,X_n]/I\to \L$, for $i \ge 1$. Then,
one verifies that $\ann(\bu_{\ell'_{k,1}},\dots,\bu_{\ell'_{k,t_k}})$
is indeed $J'_k$.

Our last comment discusses the translation mentioned in
Subsection~\ref{sec:output}. The ideal $J'_k$ is $\m'$-primary, with
$\m'=\langle X_1-\xi_1,\dots,X_n-\xi_n\rangle$, as in
Subsection~\ref{sec:output}.  To replace $J'_k$ by a 
$\langle X_1,\dots,X_n\rangle$-primary ideal, we need to modify the
sequences $\bu_{\ell'_{k,1}},\dots,\bu_{\ell'_{k,t_k}}$. For
$i \ge 1$, let $U_{k,i} \in \L[[X_1,\dots,X_n]]$ be the generating series
of $\bu_{\ell'_{k,i}}$, and let $\tilde U_{k,i}= \frac1 {(1+\xi_1
  X_1)\cdots (1+\xi_n X_n)}U_{k,i} (\frac{X_1}{1+\xi_1
  X_1},\dots,\frac{X_n}{1+\xi_nX_n})$. Letting
$\tilde \bu_{{k,i}}$ be the sequence whose generating series is
$\tilde U_{k,i}$, $\ann(\tilde \bu_{{k,1}},\dots,\tilde\bu_{{k,t_k}})$ is
indeed the $\langle X_1,\dots,X_n\rangle$-primary ideal $J_k''$
obtained by translation by $(\xi_1,\dots,\xi_n)$ in $J'_k$.

\bibliographystyle{plain}

\end{document}